\documentclass[11pt, one column]{article}

\usepackage[T1]{fontenc}

\usepackage[margin=1in]{geometry}

\usepackage{graphicx}
\usepackage{hyperref}
\usepackage{color}

\usepackage{mathtools}
\usepackage{amssymb,amsmath,amsfonts,amstext,amsthm}
\usepackage{tikz}  
\usepackage[shortlabels]{enumitem}
\usepackage{authblk}
\usepackage{orcidlink}
\usepackage{cite}

\newtheorem{theorem}{Theorem}[section]
\newtheorem{corollary}{Corollary}[section]
\newtheorem{lemma}{Lemma}[section]
\newtheorem{proposition}{Proposition}[section]
\newtheorem{definition}{Definition}[section]
\newtheorem{example}{Example}[section]

\allowdisplaybreaks

\title{Hotelling-Downs with Facility Synergy:\\ The Mall Effect}

\author{Elliot Anshelevich\orcidlink{0000-0001-9757-6839}}
\author{Jianan Lin\orcidlink{0009-0007-3290-8864}}
\author{Noah Prisament\orcidlink{0009-0006-0989-0689}}
\affil{Rensselaer Polytechnic Institute (RPI), Troy, NY}
\affil{
    \href{mailto:eanshel@cs.rpi.edu}{\texttt{eanshel@cs.rpi.edu}},
    \href{mailto:hcm6755@gmail.com}{\texttt{hcm6755@gmail.com}},
    \href{mailto:nprisament@gmail.com}{\texttt{nprisament@gmail.com}}
}

\date{\today}

\begin{document}
\maketitle
\begin{abstract}
We consider a variation of the classic Hotelling-Downs model with the addition of facility synergies. Unlike in the classic model, where clients always use the facility closest to them, we study clients who prefer locations with many facilities to those with few facilities while simultaneously attempting to minimize their distance as well. We show that, in contrast with the classic model, Nash equilibria for our setting always exist, and, in fact, there always exists a Nash equilibrium such that the sum of client costs equals the cost of the optimal solution. Our main result is a bound of $\frac{225}{64}\approx 3.516$ on the Price of Anarchy for our model, showing that, although the client behavior is more complex in our model (and often more realistic depending on the application), the cost of Nash equilibrium solutions still cannot be much worse than the cost of the optimal facility placement.

\end{abstract}
\raggedbottom

\section{Introduction}
Beginning with Hotelling's seminal work almost a century ago \cite{Harold_Hotelling}, the Hotelling-Downs model has grown to become one of the classics of game theory. Originally defined using the story of ice cream vendors choosing their placement along a beach, it has been since used to model a variety of applications, including the strategic placement of facilities and firms, as well as the formation of political parties and the ideological placement of strategic political candidates.

In most existing work on the topic, there are {\em clients} located along some interval $X$, and a set of $k$ {\em facilities} or firms. Each facility can individually choose a location $x_i\in X$ where it will be placed; it is allowed to choose any location in $X$. In the classic model, each client simply utilizes the closest facility to them, i.e., a client located at $x$ would shop at the position $x_i$ with a facility such that $|x-x_i|$ is as small as possible. Note that in political and social choice applications, this corresponds to a voter/client supporting the party/candidate which is closest to them. Knowing this client behavior, the goal of each facility is typically to position itself to maximize the total fraction of clients using their facility. If multiple facilities choose the same location, then they typically share the clients equally. More precisely, let $u_i$ represent the fraction of clients which use location $x_i$, i.e., the fraction of clients for whom $x_i$ is the closest location of any point with a facility at it; and let $n_i$ be the number of facilities at $x_i$. Then, the {\em utility} of a facility located at $x_i$ is defined to be $u_i/n_i$: it is the total fraction of clients which use that location, divided by the number of facilities located there.

Many important variations of the above classic model have been studied for many diverse applications, as we discuss in the Related Work section. However, an important aspect missing from existing work is that of {\em facility synergy}. In many realistic applications, a client does not simply wish to go to the closest facility but prefers to go to a location with {\em multiple} facilities if possible. Consider, for example, someone who wishes to go shopping for a luxury good, such as designer shoes. They could go to a nearby store, but they may not find something that suits them. Or they could instead go to a collection of stores all near each other, even if these stores are farther away. Similarly, when people need to buy goods without knowing exactly what they are looking for, they are much more likely to go to a shopping mall with its many options than simply to the nearest store. Such behavior does not only apply to shopping, of course. When a group of friends wants to go to a restaurant or for entertainment but does not want to decide on a place in advance, they are likely to go to a place with a large collection of options, even if it is farther away. When joining a group (such as patients choosing a hospital or a medical practice, or graduate students choosing a research group), there is a benefit to joining a larger group since there will be many options and more support, even if there is a smaller group which is technically closer to your interests. This also applies to political parties: while a small party may exist which is closer to your views, people will often join larger parties due to the benefits which come from having a large number of powerful representatives who are members of your party.

In this paper, we begin the study of facility synergy by considering the simplest model possible, which we believe captures the basic fundamental differences with the classic model without synergy. To model the fact that clients prefer to go to a location with more facilities but still care about the distance as well, we define the client cost function $c(x,x_i)=|x-x_i|/n_i$ to be the cost that a client positioned at $x$ experiences for going to location $x_i$ with $n_i$ facilities located there. Each client then uses the location that minimizes their cost $c(x,x_i)$, not simply the distance. Thus, clients prefer to use closer facilities, but at the same time prefer locations which contain more facilities.
The utility of the facility $u_i$ is still defined as before: it is the total fraction of clients who use location $x_i$, and thus the utility that each facility receives is still just $u_i/n_i$.

\subsection{Our Contributions}
As done previously for the classic Hotelling-Downs model, our goal in this paper is to understand the structure of Nash equilibrium solutions which result from the self-interested behavior of the facilities. We are especially interested in the relationship of Nash equilibrium solutions with the optimal facility placement, as quantified by Price of Anarchy and Price of Stability measures.\footnote{The Price of Anarchy \cite{PoA_name} is the ratio between the cost of the worst Nash equilibrium and the cost of the optimal facility placement, while the Price of Stability \cite{anshelevich2008price} is the same ratio but for the best Nash equilibrium instead of the worst one.} In other words, we are interested in quantifying how much the self-interest of the facilities hurts the clients as compared to facilities being placed by some central and altruistic decision maker. Here, the {\em cost} of a solution is defined as the total cost of all the clients, as in most previous work.\footnote{Note that the total sum of the utilities of the facilities for our model is always the same since the game is constant sum, due to the fact that every client uses some facility location. This is in contrast to previous work such as \cite{Variations}. However, the cost of the clients can vary greatly depending on the solution.}

Although similar to the classic model, the difference in client behavior in our model results in very different properties of Nash equilibria and behavior of the facilities. This mainly stems from the fact that, unlike in the classic model where a set of clients using a particular facility location always forms a contiguous interval, in our model these sets can be discontinuous. In other words, while some interval of clients next to a set of facilities at $x_i$ will indeed use location $x_i$, there can be other intervals far away from $x_i$ which will also use $x_i$, as we discuss in Section \ref{sec:NE}. We call such intervals {\em bubbles}. The existence of bubbles greatly changes the structure of Nash equilibrium solutions and requires new techniques for their analysis. Another notable difference which complicates the analysis is that in the classic model, an addition of an extra facility always decreases the utilities of all the other facilities; in our model, however, it can decrease the utilities of some, but improve the utilities of others due to clients taking into account facility synergy. Thus, the facility utilities are non-monotone with respect to the addition of other facilities.

Despite the added complexity of the model, we show that in some ways the Nash equilibrium properties are improved in our setting. {\em We first prove that Nash equilibrium solutions always exist in our model, and in fact that the Price of Stability always equals 1, i.e., that there always exists a Nash equilibrium with the total client costs being as small as possible.} In contrast, it is well-known that a Nash equilibrium {\em does not} always exist in the classic Hotelling-Downs model \cite{Eaton_Lipsey}, and further that even when it does exist, it can be as much as a factor of 2 larger in cost than the optimal solution.

We then proceed to our main result, which is an analysis of Price of Anarchy for our model. In the classic Hotelling-Downs model, where clients simply go to the closest facility, the Price of Anarchy is at most 2 \cite{Networks}. Despite the existence of bubble intervals and other differences in our model behavior, we are able to {\em prove a bound of $\frac{225}{64}\approx 3.516$ on the Price of Anarchy in our model.} Doing this requires showing various structural results about the properties of Nash equilibrium solutions, such as that the size of bubble intervals is bounded compared to adjacent intervals. Our Price of Anarchy bound shows that although the client behavior is more complex in our model (and often more realistic depending on the application), the cost of Nash equilibrium solutions still cannot be too much worse as compared with the optimal facility placement.

\subsection{Related Work}
Since Hotelling's seminal work, \textit{Stability in Competition} \cite{Harold_Hotelling}, countless variations of the original model have been studied in depth. Downs extended the Hotelling model to voting patterns in a space of political ideology, realizing the term Hotelling-Downs model \cite{Downs}. The model was again expanded upon by \cite{Eaton_Lipsey} who extended the model from strictly analyzing duopolies towards arbitrary $n$-facility markets and characterized when a pure Nash equilibrium exists and when it does not.\footnote{This characterization was recently amended by \cite{bhaskar2024equilibrium}, who also gave numerous results about the computational complexity of finding a Nash equilibrium.} This established that the classic model lacks any pure Nash equilibrium for $n=3$, has a unique pure Nash equilibrium for $n=2,4,5$ and has an infinite number of pure Nash equilibria for $n>5$. Due to the popularity of this model, we refer to the recent survey by \cite{drezner2024competitive} (as well as \cite{Survey_Spatial_Competition,thisse_1992,Survey_Competitive_Location_Models}, and \cite{Survey_Location_Analysis}) for a more complete overview of the breadth of traditional Hotelling-Downs models that have been studied. These surveys denote trends in model variants, including Graitson's enumerated changes in the number of firms, the shapes of the demand curves, and the types of spaces \cite{Survey_Spatial_Competition}.  Depending on the literature, these variants have sometimes been referred to as \textit{Facility Location Games}, \textit{Competitive Facility Location Games}, and \textit{Voronoi Games} --- all with essentially the same meaning; however, to minimize confusion, we will refer to them as Hotelling-Downs games. However, throughout this extensive body of work, we are not aware of any analysis of our client behavior in which clients prefer locations with more facilities. 

Many variations of the model involve changing the metric space, $X$, that clients and facilities occupy. This is often done with the extension to a circle instead of an interval, thereby inducing periodic boundary conditions, as studied by \cite{Eaton_Lipsey} and \cite{Salop_Circles}. Further, a logical next step includes location games in 2-dimensional spaces, e.g. \cite{2-dimensional}, or in nonlinear or discrete markets, including graphs \cite{Cycle_Graphs}, networks \cite{Networks}, and finite sets of locations \cite{Finite_Locations}.

Other variations change how facilities attract clients or how said clients choose the facility they will utilize. For example, \cite{Variations} propose a variation on the original Hotelling-Downs model in which all facilities have a limited attraction interval and ``the support of clients that fall in the attraction interval of several agents is randomly shared among the latter.'' Similarly, \cite{Limited_Attraction} study a model akin to that of \cite{Variations} formalized with an attraction width $w_i$ for each facility and later extend this model to results over arbitrary distributions of clients. These models can both be considered variants of the general set of probabilistic \textit{Shapley Facility Location Games} from \cite{Shapley_Games}. A modification from the client perspective is captured by \cite{Random_Tolerance_Intervals}, who analyze a model where each client has a randomly distributed tolerance interval and they utilize the nearest facility in this interval, if one exists. This captures the concept that clients may not utilize the closest facility if it is too far from them. One type of variation that is close in spirit to our work is that of Hotelling-Downs models that take into account network externalities, such as the client cost function being a linear combination of distance and facility congestion as in \cite{Load_Balancing,peters2018hotelling}, or of distance and facility popularity as in \cite{fournier2024popularity}.

There are also similar models comprising of multi-unit and multi-stage games, including those in which single agents can place multiple facilities concurrently \cite{Multi_Unit} or in consecutive rounds \cite{The_Voronoi_Game}. Another multi-stage variation includes \cite{Waiting_Time} in which the first round involves facilities choosing locations on a graph and a second stage in which clients distribute their purchasing power. Likewise, multi-stage games include those where location choice is followed by additional differentiation such as price competition leading to ``non-symmetric'' facilities \cite{ECONOMIDES198667}. There is also a somewhat separate direction of research (see, e.g., \cite{chan2021mechanism} and the references therein) which uses a lot of similar terminology of facility location games but is concerned more with developing centralized mechanisms for placing all the facilities based on the locations reported by the clients, with the clients being able to lie about their true locations.

For these game-theoretic models, facility utility functions are generally either the more popular ``support maximizers'' function (utility is proportional to clients received) that is employed by our model or a ``winner-takes-all'' function that is generally reserved for modeling political landscapes \cite{Variations,Limited_Attraction}. Likewise, social welfare is generally modeled as minimizing the social cost due to client travel or prices, but it is occasionally defined as client ``participation'' in models that are not guaranteed to serve all clients as in \cite{Variations}. Lastly, the distribution of clients in the space is generally uniform across this body of work; however, there are exceptions that derive results for arbitrary or random distributions of clients such as \cite{Finite_Locations}, \cite{Price_of_Total_Anarchy}, and others.

Generally, the results derived in existing work focus on the existence and uniqueness of Nash equilibria, the optimal solutions, and the efficiency of said equilibria. A useful measure of the efficiency of equilibria are what are known as the Price of Stability for measuring the best-case efficiency \cite{anshelevich2008price}, and Price of Anarchy for measuring the worst-case efficiency \cite{PoA_first,PoA_name}. This is our focus as well.

\section{Our Model and Facility Synergy}\label{sec:model}
\paragraph{Classic Hotelling-Downs Model} First, let us recall the classic Hotelling-Downs model. We are given an interval $X$; without loss of generality, let us assume that $X=[0,1]$. There are $k$ facilities. Each facility can choose where in the interval $X$ it should be placed; it is allowed to choose any location in $X$. In the version of the problem we study, multiple facilities are allowed to place themselves at the same location; in this case, we say that a {\em stack} of $n_i$ facilities is located at $x_i\in X$. A {\em facility placement} is a set of locations $x_i$ and number of facilities $n_i$ at each of these locations, so that $\sum_i n_i= k$.\footnote{$i$ is formally defined as an arbitrary index from among all facility stacks.}

There are {\em clients} located in $X$; as in much of the previous work, we assume that the clients form a continuum and are uniformly distributed in $X$. In the classic model, the clients would utilize the closest facility to them, i.e., a client located at $x$ would use position $x_i$ with a facility so that $d(x,x_i)$ is as small as possible, where $d(x,x_i)$ is simply the distance $|x-x_i|$. For a given facility placement, let $U_i$ be the set of clients who use the location $x_i$, i.e., $U_i = \{x\in X : i = \arg\min_j d(x,x_j)\}$. Then $u_i$ (the size of $U_i$) is the total fraction of clients using location $x_i$.\footnote{Note that $U_i$ is usually an infinite set. By the ``size of $U_i$'', we mean the total fraction of clients in $U_i$ compared to $X$. For example, if $U_i$ consists of all clients in the interval $[\frac{2}{3},1]$, then $u_i$ would equal $\frac{1}{3}$. Formally, $u_i$ is defined as $u_i=\int_{x\in U_i}dx$.} The {\em utility} of a facility located at $x_i$ is defined to be $u_i/n_i$: it is the total fraction of clients which use that location, divided by the number of facilities located there, as clients going to a location are assumed to be equally shared between all the facilities at that location.

The clients are non-strategic, and simply use the facility closest to them. The facilities, on the other hand, choose their locations $x_i$ in order to maximize their utility as defined above. The choices made by the facilities on where they are positioned determines which locations the clients will use, and thus the utilities of the facilities themselves. In other words, the facilities are players in a game where they can choose any $x\in X$ as their strategy, and their utility is as defined above. A pure {\em Nash equilibrium} is a solution (i.e., facility placement) in which no single facility can increase their utility by changing their location.\footnote{In this paper, we will only focus on pure Nash equilibria, in which facilities must pick a specific location instead of a randomized strategy.} It is well known that a Nash equilibrium may not exist for this classic model \cite{Eaton_Lipsey}, and there are results about equilibrium quality when it does exist as well.

\paragraph{Adding Client Preferences for Facility Synergy} As discussed in the introduction, we instead consider clients who care both about the distance to $x_i$ and the number of facilities at $x_i$. We choose the simplest model which captures the essence of such clients, and define the client cost function $c(x,x_i)=d(x,x_i)/n_i$ to be the cost that a client at $x$ experiences for going to location $x_i$.\footnote{All our results hold in exactly the same way if we instead define $c(x,x_i)=d(x,x_i)/(\gamma n_i)$ for some constant $\gamma$. Changing $\gamma$ is equivalent to changing the size of the space $X$ which does not change the nature of the solutions and instead only changes the values of the costs and utilities.} Clients now choose to go to $x_i$ which minimizes the above cost, which increases with the distance but decreases with $n_i$. With this cost function, clients are indifferent between going to a location with 1 facility which is distance $y$ away, and going to a location with 2 facilities which is $2y$ away. As a first step toward modeling facility synergy, we consider this natural, since the travel cost per facility visited remains the same in both. In this new model, $U_i$ is still defined as the set of clients who choose to use $x_i$, that is, $U_i = \{x\in X : i = \arg\min_j c(x,x_j)\}$. $u_i$ is still defined as before, and thus, the utility that each facility located at $x_i$ receives is still just $u_i/n_i$. The only difference is which locations the clients choose.

\paragraph{Solution Cost and Price of Anarchy} We will study the existence of pure Nash equilibria as well as their quality. Many different measures have been considered in the literature for the quality of solutions in Hotelling-Downs and similar models. For our model (as well as the classic model defined above), the total utility of the facilities always equals $|X|=1$, since the clients always go somewhere.\footnote{We will use the notation $|I|$ to refer to the length (size) of an interval $I$.} The total cost of the clients, however, can be greatly impacted by how the facilities locate themselves. Because of this, as in the classic model, we consider as our objective function the total cost of the clients in the solution. More precisely, for a fixed choice of locations by the facilities, we can define the cost of a client at $x$ to be $$c(x)=\min_i c(x,x_i).$$
Then the total client cost is simply
$$\int_{x\in X}c(x)~dx.$$
Thus, the optimal facility locations are the ones which minimize the above quantity. It is not difficult to see that in the classic model where $c(x,x_i)=d(x,x_i)$, the optimal solution is simply to equally space the facilities inside the interval, although that solution is not a Nash equilibrium. However, for our model, where $c(x,x_i) = d(x,x_i)/n_i$, there can be many optimal solutions, as discussed in the next section.

We study both the Price of Anarchy and the Price of Stability of this game. The Price of Anarchy is the ratio between the cost of the {\em worst} (largest cost) Nash equilibrium and the cost of the optimal solution. It represents the possible harm experienced by the clients due to the self-interest of the facilities: if the facilities form a Nash equilibrium, this is how bad it can be as compared to their optimal placement, as it would be created by an altruistic central authority. We also consider the Price of Stability, which looks at the same ratio but uses the {\em best} Nash equilibrium. This represents the cost increase experienced by the clients if the resulting solution is required to be an equilibrium but could be chosen in order to minimize client cost. More formally, let $I$ be any instance of our game. Let $P(I)$ be the set of all possible solutions (facility placements) for instance $I$, let $N(I)\subseteq P(I)$ be the set of all pure Nash equilibria for an instance of the game $I$, and let $cost(S)$ be the total client cost in a solution $S$ for instance $I$ (i.e., the cost for a specific facility placement $S$).
Then the Price of Anarchy and Price of Stability of $I$ are defined as follows:

$$\text{Price of Anarchy} = \sup_{S\in N(I)}\frac{cost(S)}{\inf_{S'\in P(I)}cost(S')},$$
$$\text{Price of Stability} = \inf_{S\in N(I)}\frac{cost(S)}{\inf_{S'\in P(I)}cost(S')}.$$
The overall Price of Anarchy (Price of Stability) of a game is simply the worst-case Price of Anarchy (Price of Stability) over all instances $I$.

\section{Properties of the optimal solution}
In this section, we study the structure of the optimal facility placement which minimizes the total client cost and achieves the social optimum. First, we establish a lower bound on the cost of any solution.

\begin{lemma}\label{lem:opt_cost}
The cost of any solution is at least $\frac{1}{4k}$.
\end{lemma}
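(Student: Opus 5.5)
Given any facility placement with stacks at locations $x_i$ holding $n_i$ facilities, I will compare the total client cost against the cost a client would pay if it simply used the stack that is best for it among a family of stacks with no competition. Concretely, I will use the fact that $c(x)=\min_i d(x,x_i)/n_i$ and bound this from below by $\min_i d(x,x_i)/k$ whenever $n_i\le k$. That alone is too weak: it gives only $\frac{1}{k}\int_0^1 \min_i d(x,x_i)\,dx$, and with $m$ stacks this integral can be as small as $\frac{1}{4m}$, which yields only $\frac{1}{4km}$.

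So I will instead argue stack by stack. Partition $X$ into the sets $U_i$ of clients served by stack $i$, with $\sum_i u_i=1$. A stack at $x_i$ serving a set of total measure $u_i$ incurs cost at least $\frac{1}{n_i}\int_{U_i} d(x,x_i)\,dx$. Among all measurable sets of measure $u_i$, this integral is minimized by the interval of length $u_i$ centered at $x_i$, which gives $\frac{1}{n_i}\cdot 2\int_0^{u_i/2} t\,dt = \frac{u_i^2}{4n_i}$; the reason is that only measure $2t$ of points lie within distance $t$ of $x_i$. Hence
\[
\mathrm{cost}\ \ge\ \sum_i \frac{u_i^2}{4n_i}.
\]

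Next I minimize $\sum_i u_i^2/n_i$ subject to $\sum_i u_i=1$ and $\sum_i n_i=k$. By Cauchy--Schwarz,
\[
\Bigl(\sum_i u_i\Bigr)^2\le \Bigl(\sum_i \frac{u_i^2}{n_i}\Bigr)\Bigl(\sum_i n_i\Bigr),
\]
so $\sum_i u_i^2/n_i\ge 1/k$, and therefore $\mathrm{cost}\ge \frac{1}{4k}$, as claimed.

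The main obstacle is the rearrangement step: justifying that the cost of a set of given measure around $x_i$ is at least that of the centered interval, even when $U_i$ is disconnected (the bubbles). I would handle it by the layer-cake bound $\int_{U_i} d(x,x_i)\,dx=\int_0^\infty |\{x\in U_i: d(x,x_i)>t\}|\,dt \ge \int_0^\infty \max(0,u_i-2t)\,dt = u_i^2/4$. This bound holds regardless of connectivity or of boundary truncation at $0$ and $1$, and truncation only helps. Ties between stacks have measure zero and do not matter. Equality holds for $k$ singletons equally spaced, which confirms tightness.
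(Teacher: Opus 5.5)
Your proof is correct and follows the same route as the paper: the per-stack lower bound $\frac{u_i^2}{4n_i}$ from the centered interval of length $u_i$, then minimizing $\sum_i \frac{u_i^2}{4n_i}$ subject to $\sum_i u_i=1$ and $\sum_i n_i=k$. Your layer-cake argument and your Cauchy--Schwarz step make rigorous two steps the paper only asserts: that the centered interval minimizes the cost even for disconnected $U_i$, and that $u_i=n_i/k$ is the minimizer.
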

\begin{proof}
Consider an arbitrary solution, with stack $i$ located at $x_i$, containing $n_i$ facilities, and having utility $u_i$ (so each facility in stack $i$ has utility $u_i/n_i$). Then, the total cost of this solution is
$$\int_{x\in X} c(x)~dx = \sum_i \int_{x\in U_i} c(x,x_i)~dx =  \sum_i \int_{x\in U_i} \frac{|x-x_i|}{n_i}~dx.$$ 
The total client cost in $U_i$ is equal to $\int_{x\in U_i} \frac{|x-x_i|}{n_i}~dx.$ If we are allowed to choose any set $U_i$ with the total size $u_i = \int_{x\in U_i}dx$ fixed, then the set minimizing this cost is exactly the contiguous interval $[x_i-\frac{u_i}{2},x_i+\frac{u_i}{2}].$ And the total cost of clients in such an interval is exactly
$$\int_{x\in U_i} \frac{|x-x_i|}{n_i}~dx = 2\int_{x=0}^{u_i/2} \frac{x}{n_i}~dx = \frac{1}{n_i}(\frac{u_i}{2})^2 = \frac{u_i^2}{4n_i}.$$
Thus, the total cost of clients in $U_i$ is always at most $\frac{u_i^2}{4n_i}$, and the cost of any solution is at least 
$$\sum_i \int_{x\in U_i} c(x,x_i)~dx =  \sum_i \int_{x\in U_i} \frac{|x-x_i|}{n_i}~dx\geq \sum_{i}\frac{u_i^2}{4n_i}.$$
Due to the definition of $u_i$, we know that $\sum_i u_i = 1$ (or more generally, equal to the length of $X$). We also know that $\sum_i n_i = k$ since there are a total of $k$ facilities. For fixed values of $n_i$, setting each $u_i=\frac{n_i}{k}$ minimizes the above function, indicating that the solution cost is lower bounded by
$$\sum_{i}\frac{(\frac{n_i}{k})^2}{4n_i}=\sum_{i}\frac{n_i}{4k^2}=\frac{k}{4k^2}=\frac{1}{4k}.$$
\end{proof}

It is not difficult to see that any proportional spacing of facilities, as defined below, results in an optimal solution. 

\begin{definition}
A {\bf proportional spacing} is defined as follows. Given a list of stacks of facilities where the number of facilities in each stack is $n_i$, we can iterate through the list and place each stack at position: $$x_i=\frac{n_i}{2k}+ \sum_{j=1}^{i-1}\frac{n_j}{k}.$$
\end{definition}

\begin{proposition}
    \label{claim:proportional} Any proportional spacing has optimal cost. 
\end{proposition}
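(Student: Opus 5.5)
The plan is to show that a proportional spacing achieves total client cost exactly $\frac{1}{4k}$, which by Lemma~\ref{lem:opt_cost} is a lower bound on every solution; this proves optimality. The natural candidate assignment in a proportional spacing gives stack $i$ the interval
\[
I_i=\Big[\sum_{j<i}\tfrac{n_j}{k},\ \sum_{j\le i}\tfrac{n_j}{k}\Big],
\]
which has length $\frac{n_i}{k}$ and is centered at $x_i$. The intervals $I_i$ partition $[0,1]$, since $\sum_i n_i=k$. If every client in $I_i$ is served at cost $\frac{|x-x_i|}{n_i}$, then by the computation in the proof of Lemma~\ref{lem:opt_cost} the cost of $I_i$ is $\frac{(n_i/k)^2}{4n_i}=\frac{n_i}{4k^2}$. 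Summing over $i$ gives $\frac{1}{4k}$.

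The care needed is that clients choose the location minimizing $c(x,x_j)$, not the assignment $I_i$. However, $c(x)=\min_j c(x,x_j)\le c(x,x_i)$ for $x\in I_i$. So the actual total cost is at most the cost of the assignment above, which is $\frac{1}{4k}$. Combined with the lower bound of Lemma~\ref{lem:opt_cost}, the cost is exactly $\frac{1}{4k}$, and no comparison of which stack each client actually picks is needed.

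I expect no real obstacle. The only point worth stating explicitly is that the optimum is defined via clients' true best responses, so an upper bound via an arbitrary feasible assignment suffices. One should also check that each $x_i$ lies in $[0,1]$ and is the midpoint of $I_i$, which is immediate from the definition of proportional spacing.
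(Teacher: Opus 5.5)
Your proof is correct, and it differs from the paper's at one step. The paper first asserts, as ``not difficult to verify'', that in a proportional spacing the set $U_i$ of clients actually choosing stack $i$ is exactly the interval $\bigl[\sum_{j<i}\frac{n_j}{k},\sum_{j\le i}\frac{n_j}{k}\bigr]$, and only then computes its cost $\frac{n_i}{4k^2}$. You instead bound the true cost $\int_X \min_j c(x,x_j)\,dx$ from above by the cost of the fixed assignment of $I_i$ to stack $i$, and let Lemma~\ref{lem:opt_cost} close the sandwich. This way you never need to know which stack each client picks.

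Your route is shorter and needs no unproved verification. The paper's route also delivers the structure it uses right afterwards, when discussing which proportional spacings are equilibria: no bubbles, $u_i=n_i/k$, and utility exactly $1/k$ per facility.

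Your sandwich recovers that structure as well. Equality forces the continuous nonnegative function $c(x,x_i)-c(x)$ to vanish identically on each $I_i$, so stack $i$ is a cost minimizer throughout $I_i$. Since the cost functions of two distinct stacks coincide in at most two points, this gives $U_i=I_i$ up to finitely many points.
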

\begin{proof}
It is not difficult to verify that for any proportional spacing, the set of clients $U_i$ using stack $i$ is always the contiguous interval
$$U_i=\left[\sum_{j=1}^{i-1}\frac{n_j}{k},\frac{n_i}{k}+\sum_{j=1}^{i-1}\frac{n_j}{k}\right].$$
Thus, $u_i=n_i/k$, and the utility of each facility is exactly $u_i/n_i = 1/k$. Moreover, the client cost of any proportional spacing is exactly $\frac{1}{4k}$. To see why this is true, note that in a proportional spacing, stack $i$ is located exactly at the midpoint $x_i$ of interval $U_i$. Thus, the total client cost for interval $U_i$ is $$\int_{x\in U_i} \frac{|x-x_i|}{n_i}~dx = 2\int_{x=0}^{n_i/2k} \frac{x}{n_i}~dx = \frac{1}{n_i}(\frac{n_i}{2k})^2 = \frac{n_i}{4k^2}.$$ Summing this up over all stacks $i$, we have a total client cost of $$\frac{\sum_i n_i}{4k^2} = \frac{k}{4k^2} = \frac{1}{4k}.$$
Combined with Lemma \ref{lem:opt_cost}, this tells us that every proportional spacing is always optimal.
\end{proof}

Proportional spacings are easy to analyze: they have no ``bubbles'' (i.e., the set of clients using a stack is always a contiguous interval), each stack obtains a total number of clients equal to $u_i=n_i/k$, and thus each facility has utility exactly $1/k$. In particular, the following natural placements of facilities are optimal.

\begin{corollary}\label{cor:central}
Placing all facilities in a stack of size $k$ at position $\frac{1}{2}$ results in an optimal solution.     
\end{corollary}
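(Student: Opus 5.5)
This is a direct special case of Proposition~\ref{claim:proportional}, so the plan is to check that the described placement is a proportional spacing and then invoke that proposition. Take the list of stacks to consist of a single stack, with $n_1 = k$. The definition of a proportional spacing places this stack at
$$x_1 = \frac{n_1}{2k} + \sum_{j=1}^{0}\frac{n_j}{k} = \frac{k}{2k} + 0 = \frac{1}{2},$$
where the sum is empty. So putting all $k$ facilities in one stack at $\frac{1}{2}$ is exactly the proportional spacing for the one-element list $(k)$.

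Proposition~\ref{claim:proportional} then says this placement has optimal cost. Its cost is $\frac{1}{4k}$, which matches the lower bound of Lemma~\ref{lem:opt_cost}.

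As an optional direct check, every client uses the single stack, so $U_1=[0,1]$. The total cost is then
$$\int_0^1 \frac{|x-\frac12|}{k}\,dx = \frac{1}{k}\cdot 2\int_0^{1/2} x\,dx = \frac{1}{4k},$$
which again equals the lower bound of Lemma~\ref{lem:opt_cost}.

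No step here is a real obstacle. The only point needing care is reading the empty sum in the definition of proportional spacing as $0$ when the list has a single stack.
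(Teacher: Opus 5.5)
Your proposal is correct and matches the paper's approach: the corollary is stated as an immediate special case of Proposition~\ref{claim:proportional}, the single-stack list with $n_1=k$, whose proportional position is $\frac{k}{2k}=\frac{1}{2}$. Your direct check that the cost is $\frac{1}{4k}$, matching Lemma~\ref{lem:opt_cost}, is a valid optional confirmation.
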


\begin{corollary}\label{cor:spaced}
Placing each facility in a separate stack, with the first at position $\frac{1}{2k}$ and the $i$'th at position $\frac{i-1}{k}+\frac{1}{2k}$ results in an optimal solution.
\end{corollary}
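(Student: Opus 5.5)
The plan is to obtain Corollary~\ref{cor:spaced} as a direct instance of Proposition~\ref{claim:proportional}. Take the list of $k$ stacks, each with $n_i=1$, in order $i=1,\dots,k$. Substituting $n_j=1$ into the definition of a proportional spacing gives
$$x_i=\frac{1}{2k}+\sum_{j=1}^{i-1}\frac{1}{k}=\frac{i-1}{k}+\frac{1}{2k}.$$
This is exactly the placement described in the corollary: the first facility sits at $\frac{1}{2k}$ and the $i$'th at $\frac{i-1}{k}+\frac{1}{2k}$. Hence this placement is a proportional spacing, and Proposition~\ref{claim:proportional} says it has optimal cost.

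As a sanity check, the proof of Proposition~\ref{claim:proportional} shows that each facility serves the contiguous interval $U_i=[\frac{i-1}{k},\frac{i}{k}]$ and sits at its midpoint. Each interval therefore contributes $\frac{1}{4k^2}$, for a total of $\frac{1}{4k}$. This matches the lower bound of Lemma~\ref{lem:opt_cost}.

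The only point needing attention is that the proposition's proof asserts, rather than verifies in detail, that $U_i$ is exactly that interval. With all $n_i=1$, costs reduce to plain distances, so each client simply uses the nearest facility. The Voronoi cells of the equally spaced points are then exactly the intervals $[\frac{i-1}{k},\frac{i}{k}]$: consecutive facilities are $\frac1k$ apart, so the cell boundaries are at the midpoints $\frac{i}{k}$. No bubbles can arise here, so this step is immediate and I expect no real obstacle.
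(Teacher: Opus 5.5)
Your proposal is correct and matches the paper's reasoning: the corollary is stated right after Proposition~\ref{claim:proportional} as the special case $n_i=1$ of a proportional spacing, which is exactly your substitution $x_i=\frac{1}{2k}+\frac{i-1}{k}$. Your extra check that the nearest-facility cells are $[\frac{i-1}{k},\frac{i}{k}]$ and that the total cost is $\frac{1}{4k}$ is also correct.
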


The latter solution is exactly the optimum for the classic Hotelling-Downs model as well. The former solution is the simplest to analyze: it is simply all the facilities teaming up to form a giant mall in the middle of the interval, so the cost of a client located at $x$ is exactly $|x-\frac{1}{2}|/k$. 

Although the optimal solutions in this model are not difficult to analyze, they are {\em not necessarily} Nash equilibrium solutions. Nevertheless, we can establish the following claim, which shows the existence of Nash equilibria which are in contrast to the classic model.

\begin{proposition}\label{claim:PoS}
The solution from Corollary \ref{cor:central} is a Nash equilibrium. Thus, Nash equilibria always exist, and the Price of Stability is 1. 
\end{proposition}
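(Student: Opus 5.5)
We need to show that no single facility in the central stack of size $k$ at $\frac12$ can strictly improve its utility by moving. In the solution of Corollary \ref{cor:central}, every client uses the stack at $\frac12$, so each facility has utility exactly $\frac1k$. Consider one facility that deviates to some location $y\in[0,1]$. If $y=\frac12$ nothing changes, so assume $y\neq\frac12$; by symmetry take $y<\frac12$. After the deviation there is a stack of $k-1$ facilities at $\frac12$ and a single facility at $y$. We will compute the set $U$ of clients who now prefer $y$ and show $|U|\le \frac1k$. The trivial case $k=1$ needs no argument, since a lone facility always gets all clients regardless of location.

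\textbf{Characterizing $U$.} A client at $x$ goes to $y$ (weakly) iff $|x-y|\le |x-\frac12|/(k-1)$. Both sides are piecewise linear in $x$, and the inequality defines a closed interval containing $y$. To see this, note that the set where a convex function $|x-y|$ is at most another convex function scaled down need not be convex in general. Here, however, the solution set of $(k-1)|x-y|\le |x-\frac12|$ is an interval because $k-1\ge1$. For $k=2$ it is the half-line $x\le \frac{y+1/2}{2}$, i.e.\ the clients closer to $y$. For $k\ge 3$ the right side has slope $\pm1$ while the left has slopes $\pm(k-1)$ with $k-1>1$, so the difference $(k-1)|x-y|-|x-\frac12|$ is convex-like (slopes increase across $y$). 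Its sublevel set at $0$ is therefore an interval $[a,b]$. Solving the linear equations gives
\[
b=\frac{(k-1)y+\tfrac12}{k},\qquad a=\frac{(k-1)y-\tfrac12}{k-2},
\]
with $a$ truncated at $0$. The interval has length at most $b-a$.

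\textbf{Bounding the length.} Untruncated, $b-a=(\frac12-y)\bigl(\frac{k-1}{k-2}-\frac{k-1}{k}\bigr)\cdot$(const). More simply, the untruncated interval is the Apollonius-type interval of points whose distance ratio to $y$ versus $\frac12$ is at most $\frac1{k-1}$. Its length is $2(k-1)d/((k-1)^2-1)$ with $d=\frac12-y$, which can be large when $d$ is large. So the key is to use the truncation at $0$: $|U|\le b-\max(a,0)\le b=\frac{(k-1)y+1/2}{k}$. This quantity is not obviously $\le\frac1k$ for large $y$, so we need to combine the two bounds. We will bound $\min\bigl(b,\;b-a\bigr)$, which is $\le\frac1k$. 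Here $b-a=\frac{2(k-1)(\frac12-y)}{k(k-2)}$ and $b=\frac{(k-1)y+\frac12}{k}$. The first is decreasing in $y$ and the second increasing, so the maximum of the minimum is at their crossing point. Solving gives the crossing and a value we check to be at most $\frac1k$; in fact this is the point where $a=0$, i.e.\ $y=\frac{1}{2(k-1)}$, giving $b=\frac1k$ exactly. Thus the deviating facility gets at most $\frac1k$, which is no improvement. The case $k=2$ is handled directly: the deviator gets $\frac{y+1/2}{2}\le\frac12=\frac1k$.

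\textbf{Conclusion and main obstacle.} Having shown that no deviation improves utility, the placement is a Nash equilibrium. It is optimal by Corollary \ref{cor:central}, so Nash equilibria exist and the Price of Stability is $1$. The main obstacle is the length computation for $U$, in particular correctly handling the left endpoint and its truncation at $0$. I would first double-check the formula for $a$ (for $x<y$: $(k-1)(y-x)=\frac12-x$, giving $x=\frac{(k-1)y-1/2}{k-2}$). I would then verify that $|U|$ is maximized exactly when $a$ hits $0$, which yields the clean value $\frac1k$.
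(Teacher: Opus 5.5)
Your proposal is correct and follows essentially the same route as the paper. In both, the deviator's clients form the interval $[\max(a,0),b]$ with the same endpoints, and the case split at $a=0$ (i.e.\ $y=\frac{1}{2(k-1)}$) yields the bound $\frac1k$. The only cosmetic difference is in the case $a\ge 0$: the paper bounds $b-a$ by replacing the denominator $k-2$ of $a$ with $k$, whereas you use that $b-a$ is decreasing in $y$ and evaluate it at the crossing point.

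One small imprecision: $(k-1)|x-y|-|x-\frac12|$ is not convex, since its slope drops from $k$ to $k-2$ at $x=\frac12$. It is, however, decreasing for $x<y$ and increasing for $x>y$, and that is all you need for its zero sublevel set to be an interval.
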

\begin{proof}
The proof of Proposition \ref{claim:PoS} is mostly straight-forward. Consider the solution where all facilities are located at position $\frac{1}{2}$, and consider some facility which decides to move to position $y$ (assume that $y<\frac{1}{2}$ without loss of generality). Before moving, the facility has utility $1/k$. For $k\leq 2$ that facility clearly cannot obtain more utility than that by moving away from the center, so we can assume $k\geq 3$.

A client located at $x$ has a choice between using the facilities at $\frac{1}{2}$ at a cost of $\frac{|x-\frac{1}{2}|}{k-1},$ or using the facility at $y$ at a cost of $|x-y|$. Thus, the only clients which will use $y$ are the ones for which $|x-y| \leq \frac{|x-\frac{1}{2}|}{k-1}.$ This is not true for any $x\geq \frac{1}{2}$. For $x<\frac{1}{2}$, consider the values of $x$ for which the costs of using either stack are the same, i.e., $\frac{\frac{1}{2}-x}{k-1}=|x-y|$. This occurs at $x=\frac{\frac{1}{2} + (k-1)y}{k}$, and at $x=\frac{(k-1)y-\frac{1}{2}}{k-2}$. The second value of $x$ may be negative, however, and thus to the left of the starting point of the interval $X$. Thus, the interval of clients for whom it is cheaper to use the facilty at $y$ is exactly
$$\left[\max\left(\frac{(k-1)y-\frac{1}{2}}{k-2},0\right), \frac{\frac{1}{2} + (k-1)y}{k}\right].$$
Thus, the utility obtained by the facility which moved is the length of this interval. First, let us consider the case where $\frac{(k-1)y-\frac{1}{2}}{k-2} \geq 0$. Then, the utility obtained by the facility which moved to $y$ is exactly
$$\frac{\frac{1}{2} + (k-1)y}{k} - \frac{(k-1)y-\frac{1}{2}}{k-2} \leq \frac{\frac{1}{2} + (k-1)y}{k} - \frac{(k-1)y-\frac{1}{2}}{k} = \frac{1}{k}.$$
This means that by moving, the facility obtained a utility of at most $\frac{1}{k}$, which is how much it already had before moving to its new position. The same occurs for the case where $\frac{(k-1)y-\frac{1}{2}}{k-2} < 0$. In that case, we have that $y<\frac{1}{2(k-1)}$, since we assumed that $k\geq 3.$ Thus, the length of the above interval of clients, and thus the utility obtained by the facility that moved, is at most
$$\frac{\frac{1}{2} + (k-1)y}{k} \leq \frac{\frac{1}{2} + (k-1)\frac{1}{2(k-1)}}{k} = \frac{1}{k}.$$

Therefore, the facility which moved to $y$ cannot increase its utility to greater than $\frac{1}{k}$ by performing this move. Since the move was arbitrary, this means that the solution in which everyone is located at position $\frac{1}{2}$ is a Nash equilibrium.
\end{proof}

In fact, most of the proportional facility spacings are Nash equilibrium solutions. For example, if each stack is of size at least 2 (no facility is located by itself), then an optimal solution is a Nash equilibrium. On the other hand, the solution in Corollary \ref{cor:spaced} is not a Nash equilibrium, as the first and last facilities have incentive to move closer to the middle and thus obtain a larger share of the clients.\footnote{Similarly to all Nash equilibria in the classic model, the first and last facilities can never be stacks of size 1 for an optimal solution to be a Nash equilibrium when $k>1$. This is a necessary but not sufficient condition for optimal equilibria.}

\section{Properties of Nash equilibria}\label{sec:NE}
Although the best Nash equilibrium solutions are easy to analyze, since they are the same as optimal solutions, looking at the properties of general Nash equilibria becomes complex. The main difference compared to the classic model is the existence of {\em bubbles}. To define these formally, we first need to introduce the notion of core intervals.

Recall that the cost function $c(x)$ is defined as $ \min_i\{c(x,x_i)\}$. Thus, the clients $U_i$ providing utility to stack $i$ are exactly the ones at locations $x$ where $c(x)=c(x,x_i)$. When there are several stacks $i$ with costs to them being tied, the clients can provide their utility to any of them; since the clients form a continuous interval, this will only occur at a finite number of points and will not affect the total utility of any stack.

\begin{definition}
For each stack located at $x_i$, define its {\bf core interval} $I_i$ as the maximal contiguous interval such that $I_i\subseteq X$, $x_i\in I_i$ and all clients $x\in I_i$ are using facility stack $i$.
\end{definition}

In other words, core intervals are the sets of clients located next to a facility stack that use this stack. In the classic model, all clients belong to a core interval. In our model, however, client behavior results in bubbles, which are intervals of clients that do not belong to any core interval. To illustrate this, and show that this does, in fact, occur in Nash equilibrium solutions, consider the following simple example.

\begin{figure*}[ht]
        \centering
        \includegraphics[width=1\textwidth]{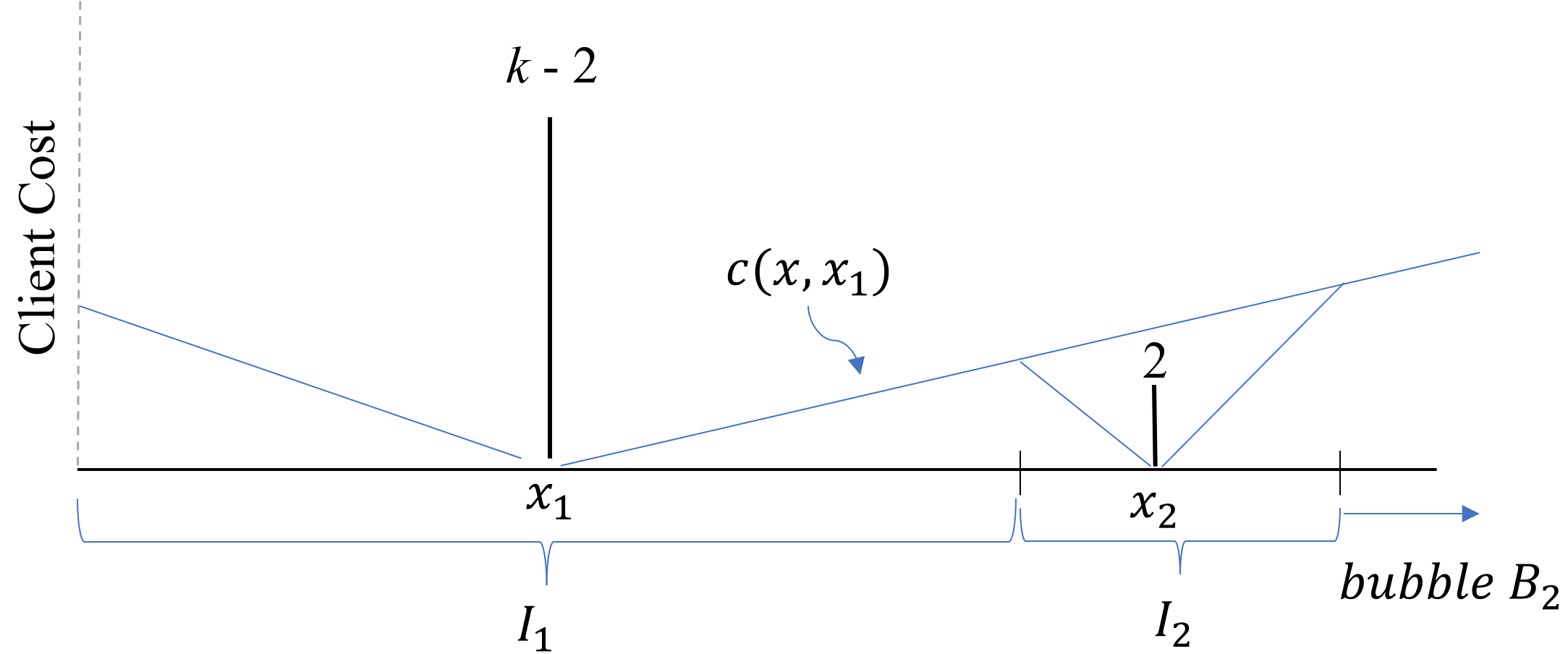}
        \caption{An 
        example with a bubble (Example \ref{ex:bubbles}). $k-2$ facilities are located at position $x_1$, and 2 facilities at location $x_2$. The blue lines show the client costs $c(x,x_1)=|x-x_1|/(k-2)$ to use facilities at $x_1$, and $c(x,x_2)=|x-x_2|/2$ to use facilities at $x_2$. The clients in interval $I_2$ use facilities at $x_2$, since that gives them the smallest cost. The clients both in intervals $I_1$ {\em and} in $B_2$ use facilities at $x_1$, since that has the smallest cost for them. Thus, the set $B_2$ forms a ``bubble'': a set of clients who use a facility stack such that they need to pass through another stack to get to it. In this figure, $u_1= |I_1|+|B_2|$ and $u_2 = |I_2|$.}
        \label{fig:example}
\end{figure*}

\begin{example} \label{ex:bubbles} {\bf[Bubble Example]} Consider the following solution, as shown in Figure \ref{fig:example}. For ease of presentation, let $|X|=2k-7-\varepsilon$; everything can be scaled equivalently to $|X|=1$. Position a stack of $n_1=k-2$ facilities at location $x_1=k-6$, and a stack of $n_2=2$ facilities at location $x_2=2k-10$. Now consider what stack the clients located to the right of $x_2$ end up using. 
It is not difficult to verify that the clients in the interval $[x_2,x_2+2]$ use the stack at $x_2$, but the clients in the interval $[x_2+2,x_2+3-\varepsilon]$ use the stack at $x_1$ again.\footnote{It does not matter what facility clients at the boundary between intervals (i.e., a location $y$ such that $c(y)=c(y,x_i)=c(y,x_j)$ where $i\neq j$) use, as they only contribute an infinitesimal amount of utility.} Thus, the clients in the latter interval form a bubble: they use the stack which is farther away from them than $x_2$, and are not part of $x_1$'s core interval. The clients using the stack at $x_1$ consist of many clients next to it (its core interval), as well as the clients on the other side of $x_2$'s core interval. 
\end{example}

The above example shows that there can be sets of clients which do not belong to any core interval, unlike in the more classic Hotelling model. To better understand what these intervals look like, we show the following simple lemma. While it certainly holds for the cost function $c(x)= \min_i\{c(x,x_i)\}$, it is convenient to prove it for more general functions, as defined below. 

\begin{definition}\label{def:c-A}
For an arbitrary set $A$, let $c_{-A}(x)= \min_{i\not\in A}\{c(x,x_i)\}$ be the cost function at position $x$ if all stacks in set $A$ were removed from the solution.
\end{definition}

\begin{lemma}
\label{monotonic}
Let $R$ be an interval such that the only stacks  located in $R$ are members of some set $A$. Then, the cost function $c_{-A}(x)= \min_{i\not\in A}\{c(x,x_i)\}$ on interval $R$ is either:
\begin{enumerate}
    \item monotonically increasing
    \item monotonically decreasing
    \item monotonically increasing and then decreasing
\end{enumerate}
\end{lemma}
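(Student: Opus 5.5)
The plan is to use the fact that, on $R$, every remaining stack lies outside $R$, so each individual cost function $c(x,x_i)$ with $i\notin A$ is monotone on $R$. Since $R$ is an interval containing no stack $j\notin A$ in its interior, each such $x_j$ lies either to the left of $R$ (at or before its left endpoint) or to the right of $R$ (at or after its right endpoint). If $x_j$ is to the left, then on $R$ we have $c(x,x_j)=(x-x_j)/n_j$, which is linear and increasing in $x$. If $x_j$ is to the right, then $c(x,x_j)=(x_j-x)/n_j$, which is linear and decreasing. A stack sitting exactly at an endpoint of $R$ still falls under one of these two cases, since $|x-x_j|$ is linear on $R$.

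Next, split the set of remaining stacks into $L$, those to the left of $R$, and $G$, those to the right. Define
\[ f(x)=\min_{j\in L}c(x,x_j), \qquad g(x)=\min_{j\in G}c(x,x_j), \]
with the convention that the minimum over an empty set is $+\infty$. The function $f$ is a minimum of increasing functions, so it is increasing on $R$. The function $g$ is a minimum of decreasing functions, so it is decreasing. Then $c_{-A}(x)=\min(f(x),g(x))$ on $R$. If $G$ is empty this equals $f$ and is increasing (case 1). If $L$ is empty it equals $g$ and is decreasing (case 2). We may assume not both are empty, since otherwise $c_{-A}$ is undefined.

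In the main case both $L$ and $G$ are nonempty. Set $h=g-f$, which is strictly decreasing on $R$ because $f$ is increasing and $g$ is decreasing; strictness holds since each piece is a linear function with nonzero slope. Hence the set $\{x\in R: f(x)\le g(x)\}$ is an initial segment of $R$, and its complement is a final segment. On the initial segment $c_{-A}=f$ is increasing, and on the final segment $c_{-A}=g$ is decreasing. So $c_{-A}$ is increasing and then decreasing, which is case 3. Cases 1 and 2 also arise here when one of the segments is empty.

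The only delicate point is the justification that $\{f\le g\}$ is an initial segment. This needs $g-f$ to be monotone, which follows because $-f$ and $g$ are both decreasing. I would state this explicitly, and note that ties at single points do not matter. Beyond that I expect no real obstacle. Monotonicity here is meant in the weak sense, though in fact all the pieces are strictly monotone.
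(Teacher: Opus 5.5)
Your proof is correct, and it is organized differently from the paper's. The paper argues locally and by contradiction. It supposes $c_{-A}$ is decreasing at some $y_1$ and increasing at a later $y_2$. It then takes the stack $x_1>y_1$ active at $y_1$ and the stack $x_2<y_2$ active at $y_2$, and chains
\[ c(y_2,x_1)<c(y_1,x_1)\le c(y_1,x_2)<c(y_2,x_2), \]
which contradicts minimality at $y_2$. You argue directly and globally instead. You split the remaining stacks into those left and right of $R$ and write $c_{-A}=\min(f,g)$ with $f$ increasing and $g$ decreasing.

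The monotonicity of $g-f$ is the paper's chained inequality in aggregate form: once some right stack beats every left stack, it keeps doing so further right. Your version has three advantages.
\begin{enumerate}
\item It avoids the pointwise notion of ``decreasing at $y_1$'' and any choice of active stack at a breakpoint. The paper makes those meaningful only implicitly, through piecewise linearity with finitely many pieces of nonzero slope.
\item It locates the peak explicitly, as the crossing point of $f$ and $g$.
\item Your observation that each $c(x,x_j)$ with $j\notin A$ is affine on $R$ immediately makes $c_{-A}$ concave on $R$, as a minimum of affine functions. The paper later attributes this concavity to the present lemma in Lemmas~\ref{LM lemma} and~\ref{lm lemma}, but the three-case monotonicity conclusion alone does not imply it.
\end{enumerate}
The paper's local argument, in turn, works directly with the stack active at a point. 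That viewpoint is reused in the later proofs, which repeatedly pick the stack whose cost function is minimal just beyond an endpoint of a core interval.
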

\begin{proof}
Let $R=[a,b]$; by assumption we know that only facilities in $A$ may be located in $R$.
The lemma follows from the definition of the cost function $c_{-A}(x)$. Recall that $c(x,x_i)$ is a strictly decreasing linear function of $x$ for $x\leq x_i$, and a strictly increasing linear function of $x$ for $x\geq x_i$, with $c_{-A}(x)$ being the piecewise linear function which is the minimum of all $c(x,x_i)$ for $i\not\in A$. 

Now, suppose to the contrary that the cost function on interval $R$ does not follow one of the patterns stated in the lemma. This means that there must exist points $y_1<y_2$ in $R$ such that $c_{-A}(x)$ is decreasing at $y_1$ and increasing at $y_2$. Let $x_1$ be the stack such that $c_{-A}(y_1)=c(y_1,x_1)$, thus, this stack is not in $A$. Since $c_{-A}(x)$ is decreasing at $y_1$ we know that $y_1<x_1$. Similarly, let $x_2$ be the stack such that $c_{-A}(y_2)=c(y_2,x_2)$; we know that $x_2<y_2$. But since $R$ does not contain any facilities which are not in $A$, it must be that $x_2<a<y_1<y_2<b<x_1$. Thus, we know that $c(y_1,x_2)<c(y_2,x_2)$ (since $y_2$ is farther from $x_2$), and $c(y_1,x_1)>c(y_2,x_1)$. Since at $x=y_1$ we have $c(x,x_1)$ as the minimum of the client costs to any stack not in $A$, it must be that $c(y_1,x_1)\leq c(y_1,x_2)$. But this is a contradiction, since it implies that $c(y_2,x_1)<c(y_2,x_2)$, which contradicts $c_{-A}(y_2)$ being equal to $c(y_2,x_2)$; the minimum of the cost functions would be at most $c(y_2,x_1)$ instead. This completes the proof.
\end{proof}


\begin{corollary}
    Lemma \ref{monotonic} holds for the cost function $c(x)= \min_i\{c(x,x_i)\}$ between two neighboring core intervals.
\end{corollary}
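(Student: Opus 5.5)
The plan is to deduce the corollary from Lemma~\ref{monotonic} by choosing the interval $R$ and the set $A$ appropriately. Let $I_i$ and $I_j$ be two neighboring core intervals, with $x_i<x_j$. Take $R$ to be the closed region between them: from the right endpoint of $I_i$ to the left endpoint of $I_j$. This region may contain bubbles but no further core interval. Take $A=\emptyset$, so that $c_{-A}(x)=c(x)$. Once we know that no stack lies in $R$, the hypothesis of Lemma~\ref{monotonic}, that the only stacks in $R$ belong to $A$, holds vacuously, and the lemma immediately says that $c(x)$ on $R$ is increasing, decreasing, or increasing then decreasing.

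So the real work is to check that no stack is located in $R$ (apart possibly from the endpoints, which do not matter). This follows from the definition of core intervals. Every stack $x_\ell$ has a core interval $I_\ell\ni x_\ell$. Clients at $x_\ell$ itself pay cost $0$ for stack $\ell$, and a continuity argument gives a small neighborhood of $x_\ell$ whose clients use $\ell$. The only exception is a tie at cost $0$ with another stack at the same point, which cannot occur because all facilities at a single location form one stack.

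Now suppose some $x_\ell$ lay strictly inside $R$. Then its core interval $I_\ell$ would be a core interval lying (at least partly) between $I_i$ and $I_j$. Core intervals of distinct stacks are disjoint, since each client in the interior uses a single stack, so $I_\ell$ would sit strictly between $I_i$ and $I_j$. This contradicts the assumption that $I_i$ and $I_j$ are neighboring. Hence $R$ contains no stacks.

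The only subtle point, and the one I expect to be the main obstacle in writing this carefully, is the meaning of ``neighboring'' together with the degenerate situations. These are that $R$ is empty or a single point (when the two core intervals touch), and that the endpoints of $R$ are tie points. In the first case the statement is trivial. In the second, the endpoints are measure-zero boundary points and can be included in or excluded from $R$ without affecting the conclusion: $c$ is continuous, and Lemma~\ref{monotonic} applied to the closed interval gives the monotonicity shape on the whole closure.
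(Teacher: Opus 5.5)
Your proposal is correct and follows the paper's own proof: take $R$ to be the gap between the two consecutive core intervals, note that no stack lies in $R$, and apply Lemma~\ref{monotonic} with $A=\emptyset$. Your added justification (every stack lies inside its own core interval, and distinct core intervals meet in at most a point) fills in the step that the paper simply states follows from the definition of core intervals.
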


\begin{proof}
Denote a region between two consecutive core intervals as $R=[a,b]$; by definition of core interval we know that no facilities are located in $R$. Then, setting $A=\emptyset$ and applying Lemma \ref{monotonic} to $c(x)=c_{-\emptyset}(x)$ and $R$ gives us the desired result.
\end{proof}

Using the above corollary, we can now define bubbles more precisely.

\begin{definition}
A {\bf bubble} is a maximal contiguous interval that is disjoint from all core intervals such that the bubble has either a monotonically increasing or a monotonically decreasing cost function. 
\end{definition}

Note that the piecewise linear cost function in a bubble can correspond to multiple different stacks, i.e., the clients in a bubble may be going to different stacks $x_i$ to minimize their costs, not necessaily all to the same one.

\begin{definition}
An {\bf up-bubble} is a bubble where the cost function is monotonically increasing.
\end{definition}

\begin{definition}
A {\bf down-bubble} is a bubble where the cost function is monotonically decreasing.
\end{definition}

Because of the existence of bubbles, we cannot use standard techniques to analyze the Price of Anarchy for our model. Moreover, even if there were no bubbles, the equilibrium solutions have different properties than those of the classic model. It is not difficult to verify, for example, that the solution in Corollary \ref{cor:central} remains a Nash equilibrium even if the stack is not located at exactly $\frac{1}{2}$, but instead deviates from $\frac{1}{2}$ by a small distance $\delta$. Because of this, we must develop a new approach for showing our Price of Anarchy bounds, as we do in Section \ref{sec:poa}, which involves bounding the size of various bubbles and relating them to neighboring core intervals. 

\subsection*{Notation and Cost Behavior in Core Intervals}
Before we proceed to bounding the Price of Anarchy, however, we prove several useful, but somewhat technical, lemmas about the properties of core intervals and bubbles. We also introduce various notation which will be heavily used in the rest of this paper. 

\begin{definition} Given a closed interval $I=[a,b]$, we define $L(I)=a$ and $R(I)=b$.
\end{definition}

Recall that we use $|I|$ to refer to the length (size) of an interval $I$, i.e., $|I|=R(I)-L(I)$.

\begin{definition}
Let $\Lambda_i= c(L(I_i))$, for any core interval $I_i$, be the cost of the left-most client of a core interval. Similarly, let $M_i=c(R(I_i))$ for any core interval $I_i$.
\end{definition}

\begin{definition}
Let $c_{-i}(x)= \min_{j\neq i}\{c(x,x_j)\}$ be the cost function at position $x$ if stack $i$ were removed from the solution.
\end{definition}

\begin{lemma}[The $\Lambda-M$ Lemma]
\label{LM lemma}
If there is an up-bubble immediately to the right hand (positive) side of an interval $I_i$, then $\Lambda_i<M_i$. Moreover, for any $x\in I_i$, let $\delta = x- L(I_i)$. Then $c_{-i}(x) \geq \Lambda_i+\delta\left[\frac{M_i-\Lambda_i}{|I_i|}\right]$ (i.e., $c_{-i}(x)$ in the interval $I_i$ is at least the linear interpolant of $\Lambda_i$ and $M_i$). The function $c_{-i}(x)$ is concave in the interval $I_i$.
\end{lemma}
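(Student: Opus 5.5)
Three things need proving: $\Lambda_i<M_i$; $c_{-i}\ge$ the linear interpolant of $\Lambda_i$ and $M_i$ on $I_i$; and concavity of $c_{-i}$ on $I_i$. We start with concavity, since it drives the other two.

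\textbf{Concavity.} On $I_i$ the only stack located there is $i$. Otherwise the core interval would contain another stack's location, and that stack's clients near it would not use $i$. Hence for every $j\neq i$, the point $x_j$ lies outside $I_i$, and $c(x,x_j)=|x-x_j|/n_j$ is linear on all of $I_i$. Then $c_{-i}(x)=\min_{j\ne i}c(x,x_j)$ is a minimum of finitely many linear functions on $I_i$, and so it is concave there. (This is the same reasoning as Lemma~\ref{monotonic} with $A=\{i\}$, but concavity is the sharper statement we need.)

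\textbf{The interpolant bound.} Next we check the values of $c_{-i}$ at the endpoints of $I_i$. At $L(I_i)$, either it is the boundary point $0$ of $X$, or it is a boundary point where $c(\cdot,x_i)$ ties with some other stack. In the boundary case with $L(I_i)=0$ we need care, so we handle the endpoints as follows.
\begin{itemize}
\item[] At the right endpoint, the up-bubble to the right means the clients just beyond $R(I_i)$ use some $j\neq i$, so by continuity $c_{-i}(R(I_i))=c(R(I_i))=M_i$.
\item[] At the left endpoint, we have $c_{-i}(L(I_i))\ge c(L(I_i))=\Lambda_i$ always, since $c=\min(c_{-i},c(\cdot,x_i))$.
\end{itemize}
A concave function lies above the chord joining its endpoint values. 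Since the endpoint values of $c_{-i}$ are at least $\Lambda_i$ and $M_i$ respectively, the chord dominates the linear interpolant of $\Lambda_i$ and $M_i$. This gives $c_{-i}(x)\ge \Lambda_i+\delta(M_i-\Lambda_i)/|I_i|$.

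\textbf{$\Lambda_i<M_i$.} This is the step I expect to be the main obstacle. Since the bubble to the right is an up-bubble, $c$ is increasing just right of $R(I_i)$. There the cost is $c(\cdot,x_j)$ for some $j\neq i$, and it must have positive slope, so $x_j<R(I_i)$. Since $x_j\notin I_i$, in fact $x_j<L(I_i)$. Consequently $c(\cdot,x_j)$ is strictly increasing across $I_i$, which gives
\[c_{-i}(L(I_i))\le c(L(I_i),x_j)<c(R(I_i),x_j)=M_i.\]

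If $L(I_i)>0$, the left endpoint is a tie point, so $c_{-i}(L(I_i))=\Lambda_i$ and we immediately get $\Lambda_i<M_i$.

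If $L(I_i)=0$, then $x_j<0$ is impossible, because stacks lie in $X$. So this case cannot occur, and the up-bubble hypothesis forces $L(I_i)>0$.

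Writing this out carefully means checking each endpoint case (interval boundary versus tie with another stack) and confirming strictness. Strictness comes from the strict monotonicity of the linear cost to $x_j$ together with $|I_i|>0$.
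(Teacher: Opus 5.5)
Your proof is correct and follows essentially the same route as the paper. Like the paper, you identify the stack $j$ serving the clients just right of $R(I_i)$, show $x_j<L(I_i)$ so that $c(\cdot,x_j)$ is strictly increasing across $I_i$, and combine concavity of $c_{-i}$ on $I_i$ with its endpoint values.

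There are two small differences. You justify concavity directly as a minimum of functions that are affine on $I_i$; the paper cites Lemma~\ref{monotonic}, which as stated only gives the increasing-then-decreasing shape, so your justification is the sounder one. Your case split on whether $L(I_i)>0$ in the $\Lambda_i<M_i$ step is unnecessary, since $\Lambda_i=c(L(I_i))\le c(L(I_i),x_j)$ holds immediately because $c$ is a minimum over all stacks, including $j$.
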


\begin{figure*}[ht]
        \centering
        \includegraphics[width=0.8\textwidth]{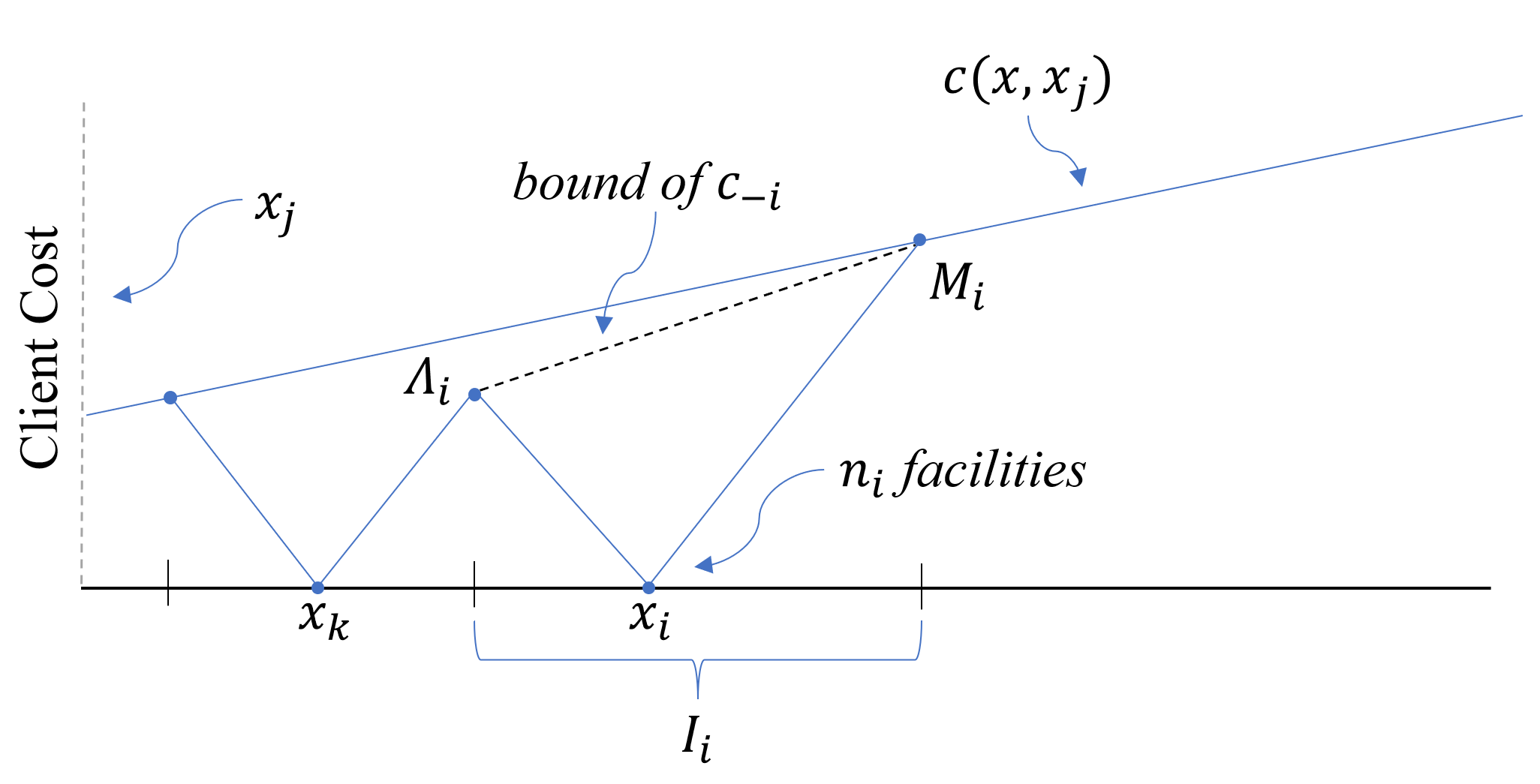}
        \caption{An illustration of the proof of Lemma \ref{LM lemma}.}
        \label{fig:LM}
\end{figure*}

\begin{proof}
See Figure \ref{fig:LM} for an illustration of this proof. Note that $c_{-i}$ is a piecewise linear function with a finite number of linear segments, each segment corresponding to the cost function of some stack. Consider the linear segment of $c_{-i}$ containing $R(I_i)$ and the points directly to the right of it (since the number of linear segments is finite, they must have non-zero length). In other words, let stack $j$ be the stack such that $c_{-i}(R(I_i))=c(R(I_i), x_j)$ and for some small $\epsilon > 0$ we have that $c_{-i}(R(I_i)+\epsilon)=c(R(I_i)+\epsilon, x_j)$. 

We know that stack $j$ must be to the left of $R(I_i)$ because stack $i$ is followed by an up-bubble and if $j$ were to the right of $R(I_i)$ the cost function would begin to decrease after $R(I_i)$ instead of increase. We also know that stack $j$ must be to the left of $L(I_i)$ or else $I_i$ would not be a core interval. We can show that $c(L(I_i), x_j) < c(R(I_i), x_j)$ because $c(x, x_j)$ is increasing from $L(I_i)$ to $R(I_i)$ because stack $j$ is to the left of $L(I_i)$. Thus, we know that $\Lambda_i$ must be less than $M_i$ because $c(L(I_i))\leq c(L(I_i), x_j)$ and $c(R(I_i))=c(R(I_i), x_j)$.

We know that $c_{-i}(L(I_i))=\Lambda_i$ and that $c_{-i}(R(I_i))=M_i$ by the definition of the cost function and $c_{-i}(x)$. Now apply Lemma \ref{monotonic} to the function $c_{-i}(x)$ and interval $I_i$. Since $I_i$ is a core interval, it cannot contain facilities other than the stack $x_i$. Thus, Lemma \ref{monotonic} tells us that $c_{-i}(x)$ is concave in this interval.
Therefore, using the concavity of $c_{-i}(x)$ alongside its values at the start and end of $I_i$, we can conclude that $c_{-i}(x) \geq \Lambda_i+\delta\left[\frac{M_i-\Lambda_i}{|I_i|}\right]$ by the definition of concave functions. This can be interpreted as $c_{-i}(x)$ is greater than the linear interpolant of $\Lambda_i$ and $M_i$ within $I_i$.
\end{proof}

While the above lemma is phrased for up-bubbles, the same holds for down-bubbles via a symmetric argument.

\begin{corollary}
\label{LM corollary}
If there is a down-bubble immediately to the left hand (negative) side of an interval $I_i$, then $M_i<\Lambda_i$. For any $x\in I_i$, let $\delta = R(I_i)-x$. Then $c_{-i}(x) \geq M_i-\delta\left[\frac{M_i-\Lambda_i}{|I_i|}\right]$ (i.e., $c_{-i}(x)$ in the interval $I_i$ is at least the linear interpolant of $\Lambda_i$ and $M_i$). The function $c_{-i}(x)$ is concave in the interval $I_i$.
\end{corollary}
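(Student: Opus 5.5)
The corollary is the mirror image of the $\Lambda$--$M$ Lemma (Lemma \ref{LM lemma}), so the plan is to reflect the instance. Let $\phi(x)=|X|-x$ map each stack at $x_j$ to $\phi(x_j)$ with the same $n_j$. Since $c(x,x_j)=|x-x_j|/n_j$ depends only on distance, we get $c'(\phi(x),\phi(x_j))=c(x,x_j)$. Hence $c'(\phi(x))=c(x)$ and $c'_{-i}(\phi(x))=c_{-i}(x)$.

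Next I will check how the structures transform under $\phi$.
\begin{itemize}
\item Core intervals go to core intervals. The core interval $I_i$ becomes $\phi(I_i)$, with $L(\phi(I_i))=\phi(R(I_i))$ and $R(\phi(I_i))=\phi(L(I_i))$.
\item Bubbles go to bubbles, because disjointness from core intervals and maximality are preserved. Monotonicity reverses: a down-bubble immediately to the left of $I_i$ becomes an up-bubble immediately to the right of $\phi(I_i)$.
\item The endpoint costs swap: $\Lambda'_i=c'(L(\phi(I_i)))=c(R(I_i))=M_i$ and $M'_i=\Lambda_i$.
\end{itemize}

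Now apply Lemma \ref{LM lemma} to the reflected instance and interval $\phi(I_i)$. It gives $\Lambda'_i<M'_i$, that is, $M_i<\Lambda_i$. For $x\in I_i$, the point $\phi(x)$ lies at offset $\delta'=\phi(x)-L(\phi(I_i))=R(I_i)-x=\delta$. The lemma then yields
\[
c_{-i}(x)=c'_{-i}(\phi(x))\ge \Lambda'_i+\delta\frac{M'_i-\Lambda'_i}{|I_i|}=M_i-\delta\frac{M_i-\Lambda_i}{|I_i|},
\]
which is exactly the claimed bound. Concavity is preserved under the affine reflection $x\mapsto|X|-x$, so $c_{-i}$ is concave on $I_i$. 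Alternatively, concavity follows directly from Lemma \ref{monotonic}, since only stack $i$ lies in $I_i$.

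The only point needing care is the bookkeeping that the definitions of core interval, bubble, and up/down-bubble are reflection-invariant apart from swapping increasing and decreasing. This is immediate from the definitions, so I expect no real obstacle. As a fallback, I would redo the lemma's proof directly: take the stack $j$ realizing $c_{-i}$ just left of $L(I_i)$, argue that $x_j>R(I_i)$, and deduce $M_i\le c(R(I_i),x_j)<c(L(I_i),x_j)=\Lambda_i$.
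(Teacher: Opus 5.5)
Your reflection $x\mapsto |X|-x$ is exactly the ``symmetric argument'' the paper invokes, made explicit, and your bookkeeping checks out: the endpoint costs swap as $\Lambda'_i=M_i$, $M'_i=\Lambda_i$, and the offset satisfies $\delta'=\delta$. One small correction concerns your alternative justification of concavity: Lemma \ref{monotonic} only gives unimodality, whereas concavity actually holds because no stack other than $i$ lies in $I_i$, so each $c(\cdot,x_j)$ with $j\neq i$ is affine there and their minimum $c_{-i}$ is concave (the paper's proof of Lemma \ref{LM lemma} glosses over the same point).
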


The above lemma bounds what happens to the cost $c_{-i}$ within a core interval $I_i$ if it is adjacent to a bubble. Below, in Lemma \ref{lm lemma}, we ask the same question for the entire set $U_i$, which will contain the core interval $I_i$, but may also contain other clients in bubbles. 

\begin{definition}
Let $l_i$ be the leftmost/smallest $x$ value of all clients that choose to utilize facility stack $i$, i.e., the leftmost point of $U_i$. Similarly, let $r_i$ be the rightmost/biggest $x$ value of all clients that choose to utilize stack $i$.
\end{definition}

\begin{lemma}
\label{cause bubble utility}
If there is an up (down) bubble on the right (left) hand side of a core interval $I_i$, then $r_i = R(I_i)$ ($l_i = L(I_i)$).
\end{lemma}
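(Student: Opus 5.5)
We prove the up-bubble case; the down-bubble case follows by the mirror-image argument, in the same way Corollary \ref{LM corollary} follows from Lemma \ref{LM lemma}. The goal is to show that no client to the right of $R(I_i)$ strictly prefers stack $i$. Then $U_i\cap (R(I_i),1]$ has measure zero (only tie points remain), so $r_i=R(I_i)$.

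First, use the argument from the proof of Lemma \ref{LM lemma}. Let $j$ be the stack whose cost segment realizes $c_{-i}$ at $R(I_i)$ and just to its right. Because an up-bubble follows $I_i$, stack $j$ lies to the left of $L(I_i)$, hence $x_j<x_i$. Since $R(I_i)$ is the right endpoint of the core interval and the clients immediately to its right do not use $i$, we have $c(R(I_i),x_j)\le c(R(I_i),x_i)$. Continuity gives equality here, because clients just inside $I_i$ use $i$.

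Next, compare the two linear functions for $y\ge R(I_i)$. Both $x_i$ and $x_j$ lie to the left of $R(I_i)$, so both costs increase in $y$ on this ray, with slopes $1/n_i$ and $1/n_j$. The key claim is $n_j\ge n_i$, i.e. the cost to $j$ grows no faster than the cost to $i$. To see this, consider $y$ slightly less than $R(I_i)$ but greater than $x_i$, which exists because $x_i\in I_i$ and, generically, $x_i<R(I_i)$. There $c(y,x_i)\le c(y,x_j)$, since $y$ is in the core interval. At $y=R(I_i)$ the two costs are equal. Both functions are increasing linear on $[x_i,R(I_i)]$, so the slope of $c(\cdot,x_j)$ must be at most that of $c(\cdot,x_i)$, giving $1/n_j\le 1/n_i$. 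If the slopes were equal, the two lines would coincide beyond $R(I_i)$; those points are ties of measure-zero consequence, or we break ties so that they do not count toward $I_i$ expanding. If $x_i=R(I_i)$, then to the right of $x_i$ we have $c(y,x_i)=(y-x_i)/n_i$ starting at $0$, while $c(y,x_j)>0$. That would make the clients just right of $x_i$ use $i$, contradicting $x_i$ being the right endpoint of $I_i$; so this degenerate case does not occur.

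Therefore, for every $y>R(I_i)$ we get $c(y,x_j)\le c(y,x_i)$, and strictly so when $n_j>n_i$. Hence $c(y)\le c(y,x_j)\le c(y,x_i)$, and stack $i$ is never the strict minimizer to the right of $R(I_i)$. No client beyond $R(I_i)$ uses $i$ except on a null set, so $r_i=R(I_i)$.

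The main obstacle is the equal-slope case $n_j=n_i$. There the lines for $i$ and $j$ coincide on the whole ray to the right of $R(I_i)$, so every client there is tied between $i$ and $j$, a set of positive measure. However, the lines meet at $R(I_i)$ with equal slope and $x_j<x_i$, and two lines of the same slope that agree at a point are identical. That would force $x_j=x_i$, which is a contradiction. So this case cannot happen, and the inequality is strict.
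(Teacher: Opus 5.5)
Your proof is correct and follows essentially the same route as the paper: take the stack $j$ that realizes the cost just right of $R(I_i)$, place it left of $L(I_i)$, establish $n_j>n_i$, and conclude that $c(\cdot,x_j)<c(\cdot,x_i)$ on the whole ray beyond $R(I_i)$, since both are increasing lines and $j$'s has the smaller slope. The paper gets $n_j>n_i$ in one step from $c(y,x_j)<c(y,x_i)$ together with $d(y,x_j)>d(y,x_i)$ at a point $y$ just right of $R(I_i)$, which makes your slope detour and your separate $x_i=R(I_i)$ case unnecessary; your interim remark that coinciding lines would produce only measure-zero ties is false, but harmless, because your last paragraph rules that case out.
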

\begin{proof}
Let $B_i$ be an up-bubble on the right of a core interval $I_i$. Let $j$ be a stack such that $c(R(I_i))=c(R(I_i),x_j)$, $i\neq j$ and for some small $\epsilon > 0$ we have that $c(R(I_i)+\epsilon)=c(R(I_i)+\epsilon, x_j)$. We know that stack $j$ must be to the left of $R(I_i)$ because stack $i$ is followed by an up-bubble and if $j$ were to the right of $R(I_i)$ the cost function would begin to decrease after $R(I_i)$ instead of increase. We also know that stack $j$ must be to the left of $L(I_i)$ or else $I_i$ would not be a core interval. Since $B_i$ is an up-bubble, we know that $c(R(I_i)+\epsilon,x_j)<c(R(I_i)+\epsilon,x_i)$ for some small $\epsilon>0$ because otherwise the position $R(I_i)+\epsilon$ would still be within the core interval of $i$, causing a contradiction. We also know that $n_i<n_j$ because $d(x_j,R(I_i)+\epsilon)>d(x_i,R(I_i)+\epsilon)$ and therefore the only way for $d(x_j,R(I_i)+\epsilon)/n_j<d(x_i,R(I_i)+\epsilon)/n_i$ is if $n_i<n_j$ holds. Thus, for all points $x'\geq R(I_i)+\epsilon$ we know that $c(x',x_j)<c(x',x_i)$ because both are linear functions, but $c(x',x_j)$ has both a lower initial value and a smaller slope (the slopes are defined as $\frac{1}{n_i}$ and $\frac{1}{n_j}$ respectively by the definition of the cost function). Therefore, the cost function of $i$ will never be minimal in this range and $i$ cannot receive utility from any clients, and likewise bubbles, to the right of its core interval.

The second part of the lemma follows from a symmetric argument for down-bubbles.
\end{proof}

\begin{definition}
Let $\lambda_i= c(l_i)$ for any stack $i$. Similarly, let $\mu_i= c(r_i)$ for any stack $i$.
\end{definition}

Note that clearly $l_i$ is to the left of (or equal to) $L(I_i)$, and $r_i$ is to the right of (or equal to) $R(I_i)$. This is because all clients in the core interval $I_i = [L(I_i),R(I_i)]$ use stack $i$, but there may also be clients outside of the core interval which use stack $i$, i.e., the clients in bubbles. This also immediately tells us that  $\lambda_i\geq\Lambda_i$ and $\mu_i\geq M_i$, since $\lambda_i$ and $\mu_i$ are farther from stack $i$'s location $x_i$. Recall Definition \ref{def:c-A} of $c_{-A}$. Then, we can show the following lemma.

\begin{lemma}[The $\lambda-\mu$ Lemma]
\label{lm lemma}
If there is an up-bubble immediately to the right hand (positive) side of an interval $I_i$, then $\lambda_i<\mu_i$. For any $x\in [l_i,r_i]$, let $\delta = x-l_i$ and let the set $A$ be defined as the set of all stacks positioned in the range $[l_i,r_i]$. Then $c_{-A}(x) \geq \lambda_i+\delta\left[\frac{\mu_i-\lambda_i}{r_i-l_i}\right]$ (i.e., $c_{-A}(x)$ in the interval $[l_i,r_i]$ is at least the linear interpolant of $\lambda_i$ and $\mu_i$) and $c_{-A}(x)$ is concave.
\end{lemma}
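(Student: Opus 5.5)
The plan is to mirror the proof of the $\Lambda$--$M$ Lemma (Lemma \ref{LM lemma}), replacing the core interval $I_i$ by the hull $[l_i,r_i]$ of $U_i$ and replacing $c_{-i}$ by $c_{-A}$.

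First, I will establish $\lambda_i<\mu_i$. By Lemma \ref{cause bubble utility}, the up-bubble to the right of $I_i$ forces $r_i=R(I_i)$, so $\mu_i=M_i$. The proof of that lemma also produces a stack $j$ to the left of $L(I_i)$ with $n_j>n_i$, which realises the cost at $R(I_i)$ and just to its right.

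Next, I will argue that $j$ lies to the left of $l_i$. The set $U_i$ extends to the left of $I_i$ only through bubbles, and a client at $x<x_j$ pays $|x-x_i|/n_i > |x-x_j|/n_j$ because $n_j>n_i$ and $j$ is closer. Hence no client left of $x_j$ uses $i$, so $l_i\ge x_j$. If $l_i=x_j$, that point is not served by $i$, so in fact $l_i>x_j$. Then $c(x,x_j)$ is increasing on $[l_i,r_i]$, and
\[
\lambda_i\le c(l_i,x_j)<c(r_i,x_j)=\mu_i .
\]
The stack $j$ is not in $A$ because it lies outside $[l_i,r_i]$.

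Second, I will show concavity. The interval $[l_i,r_i]$ contains only stacks in $A$ by definition, so Lemma \ref{monotonic} applies to $c_{-A}$ on it. Since $c_{-A}$ is a minimum of linear pieces from stacks outside the interval, each piece has slope $-1/n$ if the stack is to the right and $+1/n$ if it is to the left. I will strengthen the lemma's "increasing then decreasing" conclusion to genuine concavity by the same minimum-of-lines argument: on the interval, each $c(\cdot,x_m)$ with $m\notin A$ is affine, and a pointwise minimum of affine functions is concave.

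Third, I will pin down the endpoint values: I need $c_{-A}(l_i)=\lambda_i$ and $c_{-A}(r_i)=\mu_i$. Removing stacks can only raise costs, so $c_{-A}\ge c$ everywhere, giving $c_{-A}(l_i)\ge\lambda_i$ and $c_{-A}(r_i)\ge\mu_i$. Concavity together with these endpoint lower bounds then yields the linear-interpolant bound immediately, since a concave function lies above the chord joining its endpoint values, and that chord dominates the chord joining $\lambda_i$ and $\mu_i$. So only inequalities are needed at the endpoints, which sidesteps having to prove equality.

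The main obstacle is the endpoint reasoning and the claim that $j\notin A$ and $x_j<l_i$. Specifically, I must rule out that some bubble of $i$ extends to the left past $x_j$, or that a stack in $A$ besides $i$ affects the comparison. My first attempt is the slope argument from Lemma \ref{cause bubble utility}, mirrored to the left: if $n_j>n_i$, then $i$ never beats $j$ anywhere to the left of $x_j$.
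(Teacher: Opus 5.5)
Your proof is correct. Its skeleton matches the paper's: take the stack $j$ from Lemma~\ref{cause bubble utility}, use $n_j>n_i$ to confine $U_i$, then use concavity of $c_{-A}$ on $[l_i,r_i]$. You handle the endpoint step and the concavity step differently, though, and more economically.

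\textbf{Endpoints.} The paper proves the equalities $c_{-A}(l_i)=\lambda_i$ and $c_{-A}(r_i)=\mu_i$. This needs a separate paragraph showing that every $k\in A\setminus\{i\}$ has $n_k<n_i$, is therefore beaten by stack $i$ outside $[l_i,r_i]$, and so some stack outside $A$ attains the minimum at each endpoint. You notice that only the trivial inequalities $c_{-A}(l_i)\ge c(l_i)=\lambda_i$ and $c_{-A}(r_i)\ge c(r_i)=\mu_i$ are needed. A concave function lies above its own chord, and that chord lies above the chord from $\lambda_i$ to $\mu_i$. This bypasses the domination argument entirely, and still gives exactly the statement and everything Lemma~\ref{1 4} later draws from it.

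\textbf{Concavity.} The paper appeals to Lemma~\ref{monotonic}, which by itself only gives ``increasing then decreasing.'' Your observation is the precise reason concavity holds: each $c(\cdot,x_m)$ with $m\notin A$ is affine on $[l_i,r_i]$ because $x_m\notin[l_i,r_i]$, and a minimum of affine functions is concave. Your check that $j\notin A$ also ensures this minimum is over a nonempty set.

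\textbf{The inequality $\lambda_i<\mu_i$.} Your route, $l_i>x_j$ followed by $\lambda_i\le c(l_i,x_j)<c(r_i,x_j)=\mu_i$, is a bit more direct than the paper's detour through the crossing point $x_p$. To make $l_i>x_j$ fully rigorous, use continuity: $c(x,x_j)<c(x,x_i)$ holds on a neighborhood of $x_j$.

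\textbf{What each version buys.} The paper's version yields the extra structural fact that stacks in $A$ are invisible outside $[l_i,r_i]$, so the interpolant is tight at the endpoints. The paper re-derives this fact inside the proof of Lemma~\ref{1 4} anyway. Yours is shorter and relies on fewer delicate claims.
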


\begin{figure*}[ht]
        \centering
        \includegraphics[width=0.8\textwidth]{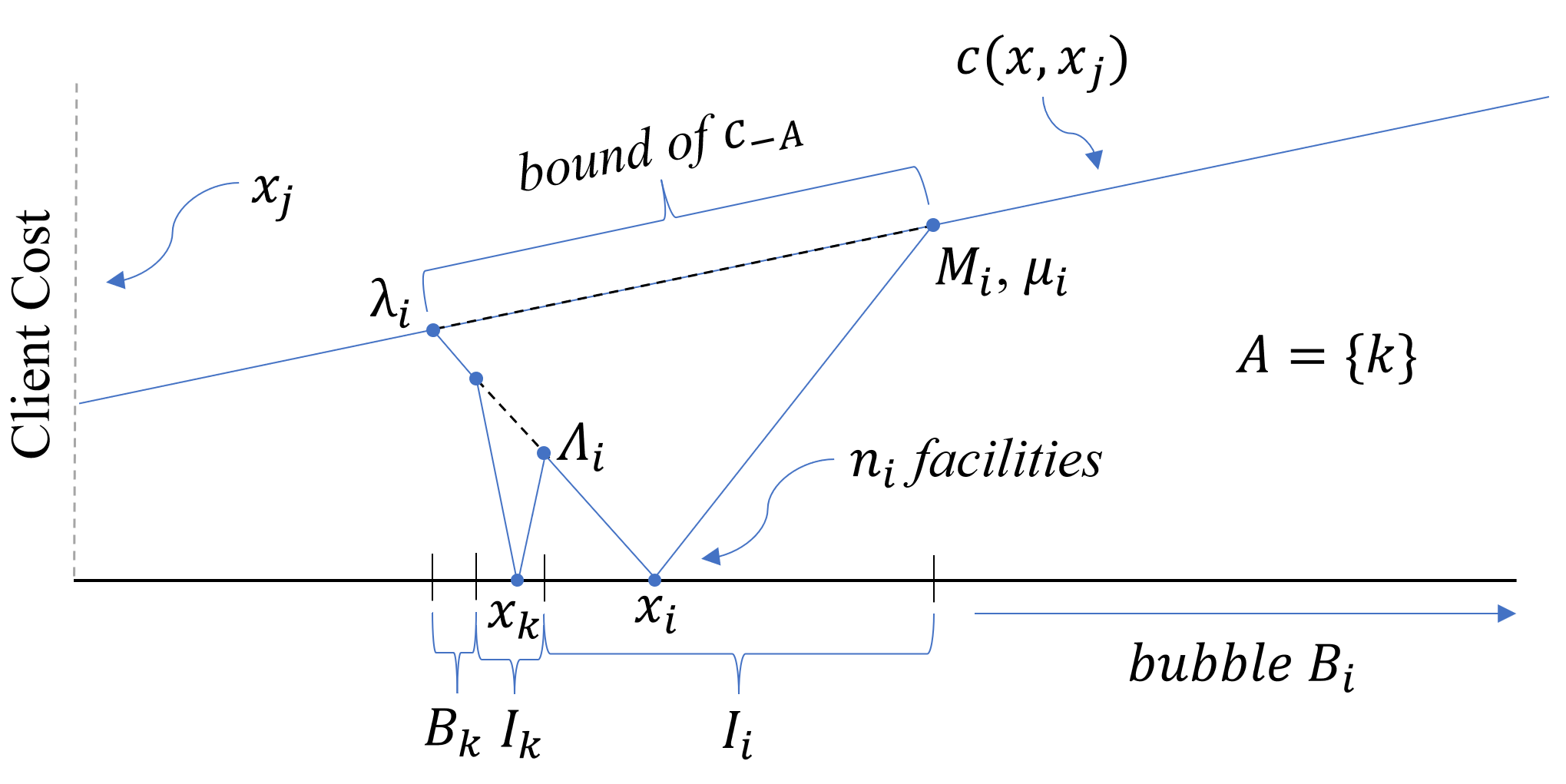}
        \caption{An illustration of the proof of Lemma \ref{lm lemma}.}
        \label{fig:lm}
\end{figure*}

\begin{proof}
See Figure \ref{fig:lm} for an illustration of this proof. Let stack $j$ be the stack such that $\mu_i=c(r_i)=c(r_i, x_j)$, $i\neq j$ and for some small $\epsilon > 0$ we have that $c(r_i+\epsilon)=c(r_i+\epsilon, x_j)$. We know that $R(I_i)=r_i$ due to Lemma \ref{cause bubble utility}. We know that stack $j$ must be to the left of $r_i$ because stack $i$ is followed by an up-bubble and if $j$ were to the right of $r_i$ the cost function would begin to decrease after $r_i$ instead of increase. We also know that stack $j$ must be to the left of $L(I_i)$ or else $I_i$ would not be a core interval. Since $j$ receives utility in the up-bubble, it must have $n_j>n_i$. Let $x_p$ be defined as $c(x_p,x_i)=c(x_p,x_j)$ and $x_p<x_i$. We can bound the utility received by stack $i$ to the left of $x_i$ using stack $j$ because we can show that for all $x'<x_p$ we have that $c(x',x_j)<c(x',x_i)$. From right to left in the interval $[x_j,x_p)$ we know that $c(x',x_j)$ is decreasing while $c(x',x_i)$ is increasing and they start with equal values, so in this range $c(x',x_j)<c(x',x_i)$, and from right to left in the interval $(0,x_j]$ we know that $c(x',x_j)$ increases slower than $c(x',x_i)$ due to the fact that $n_j>n_i$ so the cost $c(x', x_j)$ will never surpass the initial value of stack $i$'s cost function in this reverse interval either; thus, for all $x'<x_p$ we have that $c(x',x_j)<c(x',x_i)$. Thus, we know that $l_i$ is at least $x_p$ which would mean that $\lambda_i\leq c(x_p,x_j)$. This means that $\lambda_i<\mu_i$ because $c(x_p,x_j)<c(r_i,x_j)$ since the cost function of $j$ is increasing in this range.

We can show that any stack in the set $A$ cannot have an individual cost function that affects the global cost function $c(x)$ outside of the range $[l_i,r_i]$. If that stack is $i$, then, by the definition of $l_i$ and $r_i$, it will not have utility outside of that range. If the stack is stack $k\in A$ where $k\neq i$, then it must have $n_k<n_i$ or else stack $i$ would never receive utility beyond point $x_k$ which it must by the definition of $A$. The cost of $c(l_i,x_i)\leq c(l_i,x_k)$ by the definition of $l_i$ and the cost of $c(r_i,x_i)\leq c(r_i,x_k)$ by the definition of $r_i$. Thus, for any $x'<l_i$ or $x'>r_i$ we know that $c(x',x_i)<c(x',x_k)$, since $n_k<n_i$ and both of these cost functions are increasing in these ranges. Since stack $i$ receives no utility in the ranges $x'<l_i$ and $x'>r_i$ and $c(x',x_i)<c(x',x_k)$, we can conclude that stack $k$ also does not receive any utility in these ranges and therefore does not affect the global cost function. Thus, we have shown that this holds for all stacks in $A$.

We know that $c_{-A}(l_i)=\lambda_i$ and that $c_{-A}(r_i)=\mu_i$ by the definition of the cost function and $c_{-A}(x)$, since we have argued that stacks in $A$ cannot have minimum cost outside of interval $[l_i,r_i]$, and thus there must be at least one stack which is outside of $A$ which has minimum cost at $l_i$, and similarly at $r_i$. Now, apply Lemma \ref{monotonic} to the function $c_{-A}(x)$ and interval $[l_i,r_i]$. Since the interval $[l_i,r_i]$ only contains facilities in $A$ by definition of $A$, we know that $c_{-A}(x)$ is concave in this interval. 
Therefore, using the concavity of $c_{-A}(x)$ alongside its values at the start and end of the interval $[l_i,r_i]$, we can conclude that $c_{-A}(x) \geq \lambda_i+\delta\left[\frac{\mu_i-\lambda_i}{r_i-l_i}\right]$ by the definition of concave functions. This can be interpreted as $c_{-A}(x)$ is greater than the linear interpolant of $\lambda_i$ and $\mu_i$ within $[l_i,r_i]$.
\end{proof}

\begin{corollary}
\label{lm corollary}
If there is a down-bubble immediately to the left hand (negative) side of an interval $i$, then $\mu_i<\lambda_i$. For any $x\in [l_i,r_i]$, let $\delta = r_i-x$ and let the set $A$ be defined as the set of all stacks positioned in the range $[l_i,r_i]$. Then $c_{-A}(x) \geq \mu_i-\delta\left[\frac{\mu_i-\lambda_i}{r_i-l_i}\right]$ (i.e., $c_{-A}(x)$ in the interval $[l_i,r_i]$ is at least the linear interpolant of $\lambda_i$ and $\mu_i$) and $c_{-A}(x)$ is concave.
\end{corollary}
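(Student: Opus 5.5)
The plan is to obtain Corollary \ref{lm corollary} from Lemma \ref{lm lemma} by reflecting the line, not by re-running the argument. Define the reflection $\phi(x)=1-x$ and the reflected instance in which stack $j$ sits at $1-x_j$ with the same $n_j$. Since $d(\phi(x),\phi(x_j))=d(x,x_j)$, every cost $c(x,x_j)$ is preserved, so $c(x)$ and every $c_{-A}(x)$ are preserved as well. Client sets map as $U_j\mapsto \phi(U_j)$, and core intervals map to core intervals. The endpoints swap: $l_i\mapsto 1-r_i$ and $r_i\mapsto 1-l_i$, so $\lambda_i$ and $\mu_i$ exchange roles. A decreasing cost function on a bubble becomes increasing, so a down-bubble immediately to the left of $I_i$ becomes an up-bubble immediately to the right of $\phi(I_i)$.

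Applying Lemma \ref{lm lemma} in the reflected instance then gives three conclusions.
\begin{enumerate}
\item The reflected ``$\lambda<\mu$'' reads $c(1-r_i)<c(1-l_i)$ there, which pulls back to $\mu_i<\lambda_i$.
\item The set $A$ of stacks in $[l_i,r_i]$ maps to the set of stacks in $[1-r_i,1-l_i]$.
\item With $x'=1-x$, the lemma's $\delta'=x'-(1-r_i)=r_i-x$. The bound becomes
\[
c_{-A}(x)\ \ge\ \mu_i+(r_i-x)\frac{\lambda_i-\mu_i}{r_i-l_i}=\mu_i-\delta\left[\frac{\mu_i-\lambda_i}{r_i-l_i}\right],
\]
which is exactly the claimed inequality. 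Concavity is preserved under the affine reparametrization $x\mapsto 1-x$.
\end{enumerate}

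Alternatively, one can mirror the proof of Lemma \ref{lm lemma} directly. Take $j$ with $\lambda_i=c(l_i,x_j)$ winning just left of $l_i$. Lemma \ref{cause bubble utility} gives $l_i=L(I_i)$, and $x_j$ must lie to the right of $R(I_i)$. Since $j$ wins in the down-bubble, $n_j>n_i$. Choose the crossing point $x_p>x_i$ of the two cost lines; then $r_i\le x_p$, so $\mu_i\le c(x_p,x_j)<c(l_i,x_j)=\lambda_i$. The step showing that stacks in $A$ do not affect costs outside $[l_i,r_i]$ is identical to the lemma's. Finally, Lemma \ref{monotonic} applied to $c_{-A}$ on $[l_i,r_i]$ gives concavity and hence the interpolation bound.

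The main point needing care is that the reflection preserves every definition used, in particular the orientation conventions for up- and down-bubbles and for $\delta$. The only real computation is the sign bookkeeping in item 3 above.
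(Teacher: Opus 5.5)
Your proposal is correct and takes the same route as the paper: the paper gives no separate proof and obtains this corollary from Lemma~\ref{lm lemma} by symmetry (stated explicitly only for the analogous Corollary~\ref{LM corollary}), and your reflection $x\mapsto 1-x$ formalizes that symmetry, with the swap $\lambda_i\leftrightarrow\mu_i$ and the identification $\delta'=r_i-x$ handled correctly. Your alternative, which re-runs the lemma's argument directly in mirror image, is also sound.
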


\section{Price of Anarchy}\label{sec:poa}
In this section, we present our main result, which is a bound on the Price of Anarchy and thus on the quality of all Nash equilibrium solutions in our model. We begin with the following very simple property.

\begin{proposition}\label{claim:balanced}
In a Nash equilibrium, the utility of any facility is at most a factor of 2 away from the utility of any other. More precisely, given stacks of size $n_i$ and $n_j$, it must be that $$\frac{u_i}{n_i}\geq \frac{u_j}{n_j+1} \geq \frac{1}{2}\cdot  \frac{u_j}{n_j}.$$ 
\end{proposition}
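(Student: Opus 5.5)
The plan is a direct deviation argument. Fix a Nash equilibrium and two stacks $i$ and $j$, and pick any single facility $f$ in stack $i$; its current utility is $u_i/n_i$. We let $f$ deviate by moving onto location $x_j$, joining stack $j$. We show that after this move the enlarged stack at $x_j$, of size $n_j+1$, receives at least $u_j$ clients. The deviating facility then earns at least $u_j/(n_j+1)$, and the Nash condition forces $u_i/n_i\geq u_j/(n_j+1)$. The second inequality, $u_j/(n_j+1)\geq \frac{1}{2}\cdot u_j/n_j$, is just $2n_j\geq n_j+1$, which holds because $n_j\geq 1$.

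The one substantive step is showing that location $x_j$ does not lose clients after the move. We check, client by client, that every $x\in U_j$ still prefers $x_j$.
\begin{itemize}
\item The cost of going to $x_j$ drops from $|x-x_j|/n_j$ to $|x-x_j|/(n_j+1)$, so it does not increase.
\item The cost to every other stack $l\neq i,j$ is unchanged.
\item The cost to stack $i$ can only go up, since it now has $n_i-1$ facilities. If $n_i=1$, stack $i$ disappears altogether.
\end{itemize}
For a client $x\in U_j$ we had $c(x,x_j)\leq c(x,x_l)$ for all $l$. After the move, the left-hand side is weakly smaller and every right-hand side is weakly larger or removed, so $x$ still uses $x_j$, possibly together with new clients. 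Hence the new utility of location $x_j$ is at least $u_j$.

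Two technical points need care.
\begin{itemize}
\item \textbf{Ties.} Ties occur only at finitely many points and so have measure zero, as the paper notes. This means the measure-based utility is unaffected.
\item \textbf{The case $i=j$.} Here the inequality is trivial: $u_i/n_i\geq u_i/(n_i+1)$.
\end{itemize}
There is no real obstacle in this argument. The only care needed is the monotonicity observation above: in this model, adding a facility to a stack only makes that stack more attractive, and removing a facility from another stack only makes that stack less attractive. Once that is stated, the inequality follows directly from the Nash condition.
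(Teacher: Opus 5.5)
Your proposal is correct and follows the same route as the paper: move one facility from stack $i$ onto stack $j$, note that the enlarged stack keeps at least its $u_j$ clients, and apply the Nash condition. The paper states the monotonicity step (that the move can only increase $u_j$) in a single parenthetical, whereas you verify it client by client and also handle ties and the case $n_i=1$.
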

\begin{proof}
This is simply because a facility from the first stack can deviate by moving on top of the second stack $j$ (which would only increase the amount $u_j$), and get utility at least $\frac{u_j}{n_j+1}$. However, a facility should not be able to improve its utility in a Nash equilibrium.
\end{proof}

If there were no bubbles and each stack had to be located in the middle of its core interval, the above proposition could be used to form a nice bound on the Price of Anarchy. If $u_i$ clients use a stack of size $n_i$, we could simply say that the total cost of these clients equals
$$\int_{x=0}^{u_i}\frac{|x-\frac{u_i}{2}|}{n_i}~dx = \frac{u_i^2}{(4n_i)}.$$
In our model, however, just because we know approximately how many clients use a stack does not mean that we have a good bound on the cost that these clients experience. These clients may be far away from the stack due to being in a bubble, or the stack could be off-center in its core interval, which increases the client cost. Because of this, our main goal in this section becomes that of proving a limit on how much bubbles can affect things, and in particular proving a limit on the possible sizes of the bubbles.

\subsection*{Bounding Sizes of Bubbles}

Consider the following assignment of bubbles to adjacent core intervals. Assign each up-bubble to the core interval immediately to the left of the bubble and assign each down-bubble to the core interval immediately to the right of the bubble. We first show that every core interval has at most one bubble assigned to it.

\begin{lemma}
\label{assignment}
The assignment of bubbles to core intervals defined above is an injective mapping, and thus every core interval has at most one bubble assigned to it.
\end{lemma}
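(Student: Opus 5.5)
The idea is to show that two bubbles could only be assigned to the same core interval $I_i$ if one is an up-bubble on its right and the other a down-bubble on its left, and then rule out this configuration. First I check that each bubble is assigned to a well-defined core interval. By the corollary to Lemma \ref{monotonic}, on the region $R$ between two consecutive core intervals the cost function is increasing, decreasing, or increasing then decreasing. So $R$ splits into at most one up-bubble (adjacent to the left core interval) followed by at most one down-bubble (adjacent to the right core interval). Hence an up-bubble always has a core interval immediately to its left, and a down-bubble one immediately to its right. I would also treat the boundary regions $[0, L(I_{\text{first}})]$ and $[R(I_{\text{last}}),1]$ the same way: there, an up-bubble can only occur to the right of the last core interval, and a down-bubble only to the left of the first.

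Next, injectivity within one type. Distinct up-bubbles lie in distinct gaps, each gap containing at most one up-bubble. Each gap has a unique core interval immediately to its left. So two up-bubbles cannot map to the same core interval, and symmetrically for down-bubbles. The remaining case is a core interval $I_i$ with an up-bubble immediately to its right and a down-bubble immediately to its left.

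To rule this out I would apply Lemma \ref{LM lemma} and Corollary \ref{LM corollary} together. The up-bubble on the right gives $\Lambda_i<M_i$, while the down-bubble on the left gives $M_i<\Lambda_i$, a contradiction. As a cross-check, the same contradiction follows from the $\lambda$–$\mu$ versions (Lemma \ref{lm lemma} and Corollary \ref{lm corollary}), giving $\lambda_i<\mu_i$ and $\mu_i<\lambda_i$.

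The main obstacle I expect is the bookkeeping of the first step: confirming that bubbles really are maximal monotone pieces inside a single gap between core intervals, and handling the endpoints of $X$. The key fact is the unimodal (increase-then-decrease) shape from Lemma \ref{monotonic}. It forbids a decreasing piece before an increasing piece within a gap, so a gap can never contain two bubbles of the same type or a down-bubble preceding an up-bubble. Once that structure is fixed, the contradiction from the $\Lambda$–$M$ lemma finishes the proof.
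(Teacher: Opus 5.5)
Your proposal is correct and follows the paper's proof. The paper first checks that the mapping is well defined: an up-bubble cannot sit at the left end of $X$, and a down-bubble cannot sit at the right end. It then rules out a core interval with an up-bubble on its right and a down-bubble on its left, using the contradiction $\Lambda_i<M_i$ versus $M_i<\Lambda_i$ from Lemma \ref{LM lemma} and Corollary \ref{LM corollary}. Your extra argument that each gap contains at most one bubble of each type only makes explicit what the paper leaves implicit.
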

\begin{proof}
It is important to note that due to the definition of the cost function, there cannot be up-bubbles at the leftmost point of our interval $X$. This is because in order for there to be an up-bubble at the leftmost point of $X$, there must be at least one stack whose individual cost function is a part of the bubble. Since the cost function in an up-bubble is increasing, this stack must be strictly to the left of the bubble. This would mean that the stack would have to be before the leftmost point of $X$, making this impossible. Likewise, there cannot be any down-bubbles at the rightmost point of $X$ for the same reasons. Thus, this mapping is well defined with every bubble being assigned a core interval.

Now to prove the lemma; this is a proof by contradiction. See Figure \ref{fig:LM} for a helpful illustration. Assume that core interval $i$ has both an up-bubble on the right hand (positive) side and a down-bubble on the left hand (negative) side. Due to Lemma \ref{LM lemma} we can use the up-bubble to infer that $\Lambda_i<M_i$. Due to Corollary \ref{LM corollary} we can use the down-bubble to infer that $M_i<\Lambda_i$. Thus, we have a contradiction. Since there cannot be any core interval with an up-bubble on the right hand side and a down-bubble on the left hand side, no core interval will be assigned two bubbles. Thus, this mapping is injective.
\end{proof}

\begin{definition}
Let $B_i$ be defined as the interval of the bubble assigned to core interval $I_i$ by the above mapping.
\end{definition}

Note that the clients in $B_i$ do {\em not} use the facilities in $I_i$ in this solution: they use some facilities farther away; $I_i$ is simply the core interval next to the bubble $B_i$. The key component of our proof is the fact that the size of $B_i$ cannot be too large compared to $I_i$. Consider Example \ref{ex:bubbles}. The bubble on the right has size $1-\varepsilon$, and would be assigned to the core interval $I_2$, which has size of approximately 4: 2 to the right of $x_2$ and a bit less than 2 to the left of $x_2$. Thus, the size of the bubble is only about $1/4$ of the size of the core interval next to it. In the following, we prove that this is, in fact, the worst case, and no larger bubbles are possible. We do this through a series of lemmas as follows.

\begin{lemma}
\label{cause bubble utility redux}
If a stack $i$ is assigned an up (down) bubble $B_i$, then stack $i$ does not receive utility from any bubbles to its right (left) hand side.
\end{lemma}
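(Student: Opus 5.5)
By symmetry it suffices to treat the up-bubble case: stack $i$ is assigned an up-bubble $B_i$, and we must show that no client to the right of $I_i$ uses stack $i$. The down-bubble case follows by reflecting the interval $X$.

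By the assignment rule, an up-bubble is assigned to the core interval immediately to its left. So $B_i$ lies immediately to the right of $I_i$, and we are exactly in the hypothesis of Lemma~\ref{cause bubble utility}. That lemma gives $r_i = R(I_i)$. This means no client located strictly to the right of $R(I_i)$ belongs to $U_i$, up to the measure-zero tie points.

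Every bubble on the right-hand side of stack $i$ lies entirely to the right of $R(I_i)$. This is because bubbles are disjoint from all core intervals, and $I_i$ is a maximal contiguous interval containing $x_i$. Therefore every such bubble lies in the region $x>r_i$, and none of its clients use stack $i$. Hence stack $i$ receives no utility from any bubble to its right.

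I would also recall the mechanism inside Lemma~\ref{cause bubble utility}, in case one wants the argument self-contained. The stack $j$ realizing the cost just to the right of $R(I_i)$ lies left of $L(I_i)$ and has $n_j>n_i$. Since $c(\cdot,x_j)$ is already below $c(\cdot,x_i)$ at $R(I_i)+\epsilon$ and has smaller slope, it stays below from then on. The main (mild) obstacle is just ensuring that ``bubbles to its right'' means bubbles beyond $I_i$, including ones far away rather than only $B_i$. The inequality $c(x',x_j)<c(x',x_i)$ holds for all $x'>R(I_i)$, which covers every such bubble uniformly.
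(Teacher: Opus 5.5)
Your proposal is correct and follows the paper, which simply cites Lemma~\ref{cause bubble utility}: since $B_i$ sits immediately to the right of $I_i$, that lemma gives $r_i=R(I_i)$, and any bubble to the right of $x_i$ is disjoint from $I_i$, so it lies beyond $r_i$. Your self-contained recap (the stack $j$ with $n_j>n_i$ and a smaller slope staying cheaper for all $x'>R(I_i)$) is the argument already contained in that lemma's proof.
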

\begin{proof}
This is a trivial consequence of Lemma \ref{cause bubble utility}.
\end{proof}

\begin{lemma}
\label{ss bubble}
Singleton stacks (i.e., stacks with $n_i=1$) cannot receive clients or utility from a bubble. In other words, singleton stacks receive all of their utility from clients within their core interval: $U_i=I_i$.
\end{lemma}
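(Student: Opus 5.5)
We argue by contradiction, following the pattern of Lemma \ref{cause bubble utility}. Suppose a singleton stack $i$ (so $n_i=1$) receives clients at some point $y$ outside its core interval $I_i$. By symmetry we may take $y>R(I_i)$; the case $y<L(I_i)$ is the mirror image. By maximality of $I_i$, clients immediately to the right of $R(I_i)$ do not use stack $i$. So there is a stack $j\neq i$ such that $c(R(I_i),x_j)=c(R(I_i),x_i)$ and $c(x,x_j)<c(x,x_i)$ for $x$ slightly larger than $R(I_i)$. We will show that $c(x,x_j)<c(x,x_i)$ for every $x>R(I_i)$, which contradicts $y\in U_i$.

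The argument splits into two cases according to where $x_j$ lies.

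\emph{Case 1: $x_j\le R(I_i)$.} Then $x_j$ is actually to the left of $L(I_i)$. Otherwise $x_j$ would lie in the core interval $I_i$, and the clients at $x_j$ itself, who pay cost $0$ there, would use stack $j$ rather than stack $i$, contradicting the definition of $I_i$. In either case $x_j$ and $x_i$ are both to the left of every $x>R(I_i)$. On that range $c(x,x_i)=(x-x_i)/1$ has slope $1$, while $c(x,x_j)=(x-x_j)/n_j$ has slope $1/n_j\le 1$. The two functions are equal at $R(I_i)$, and $c(\cdot,x_j)$ is strictly smaller just to the right. Since $c(\cdot,x_j)$ starts no higher and grows no faster, we get $c(x,x_j)<c(x,x_i)$ for all $x>R(I_i)$. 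Hence stack $i$ gets no clients to the right of $I_i$.

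\emph{Case 2: $x_j>R(I_i)$.} Then the comparison need not come from stack $j$ alone, so we use a cleaner pointwise argument that covers every stack. Take any $x>R(I_i)$. Let $z$ be the nearest point of $I_i$ on the segment from $x_i$ to $x$, namely $z=R(I_i)$. We want to compare $c(x,x_i)=|x-x_i|$ with $c(x)$, and the natural tool is the triangle-type inequality $c(x,x_k)\le c(z,x_k)+|x-z|/n_k$, which holds for every stack $k$. Let $k$ be the stack achieving $c(R(I_i)+\epsilon)$ just outside the core. Then $k$ ties with $i$ at $z=R(I_i)$, and
\[c(x,x_k)\le c(z,x_k)+\frac{|x-z|}{n_k}\le c(z,x_i)+|x-z|=|x-x_i|=c(x,x_i).\]
The last step uses $x_i\le z\le x$, so that $|z-x_i|+|x-z|=|x-x_i|$. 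The inequality is strict unless $n_k=1$ and the distance terms are collinear and tight.

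The main obstacle is exactly this degenerate tie: $n_k=1$, with $k$ to the left of $x_i$, or with $k$ located beyond $z$. If $k$ is to the left with $n_k=1$, it cannot tie with $i$ at $z$ while being farther away, so it cannot beat $i$ just past $z$. If $k$ is to the right of $z$, it strictly wins on $(z,x_k]$ and from there on it is a sub-slope-$1$ comparison. Then applying the displayed inequality with base point $x_k$, where $c(x_k,x_k)=0<c(x_k,x_i)$, gives strict inequality for all $x\ge x_k$. Putting the cases together, every client outside $I_i$ strictly prefers another stack, up to measure-zero ties, so $U_i=I_i$.
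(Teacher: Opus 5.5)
Your proof is correct, and it takes a different route from the paper.

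\textbf{The paper's route.} It works inside its bubble framework. It places an offending client $x'$ in an up-bubble $B_j$ and uses Lemma~\ref{cause bubble utility} to exclude $j=i$. It then uses the assignment of Lemma~\ref{assignment} to produce a core interval $I_j$ lying between $I_i$ and $x'$. Finally it compares stacks $i$ and $j$ to the right of $R(I_j)$, using $x_i<x_j$ and $n_i=1\le n_j$.

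\textbf{Your route.} You never mention bubbles. You anchor the comparison at $R(I_i)$ itself, using the stack $j$ that beats $i$ immediately past its core. You then propagate with the observation that a singleton's cost has slope exactly $1$ to the right of $x_i$, while every other stack's cost is $1$-Lipschitz.

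\textbf{What each buys.} The underlying slope comparison is the same. Your version is self-contained: it does not depend on Lemmas~\ref{cause bubble utility} or~\ref{assignment}, nor on the implicit fact that every non-core client lies in a bubble with an assigned core interval. It also shows slightly more, namely that one stack beats $i$ on the entire half-line beyond $R(I_i)$. The paper's version, in exchange, reuses machinery it needs later anyway.

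\textbf{Tightening.} Your exposition can be shortened considerably. The case split and the degenerate-tie analysis are unnecessary. Also, the claim opening Case 2, that ``the comparison need not come from stack $j$ alone'', is not right: stack $j$ alone suffices, as your own final paragraph effectively shows. Simply pick $w\in(R(I_i),R(I_i)+\epsilon)$ with $c(w,x_j)<c(w,x_i)$. Then for every $x\ge w$ you have $c(x,x_j)\le c(w,x_j)+\frac{x-w}{n_j}<c(w,x_i)+(x-w)=c(x,x_i)$. Strictness on $(R(I_i),w]$ already holds by the choice of $j$.
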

\begin{proof}
This is a proof by contradiction. Assume that stack $i$ is a singleton stack that (without loss of generality) receives utility from an up-bubble. Since it is receiving utility within a bubble, then $c(x,x_i)$ within this bubble is the minimum of every stack's cost function at some point $x'$. Since bubbles are disjoint from core intervals, we know that $x'$ is not within $i$'s core interval. Let $x'$ be within bubble $B_j$. We know that $i\neq j$ because if $i=j$, then the client at position $x'$ could not choose stack $i$ due to Lemma \ref{cause bubble utility}. We can conclude that there must be a core interval between $x_i$ and $x'$ because bubble $B_j$ is assigned to core interval $j$ that must exist to the left of $B_j$ and to the right of $I_i$ due to Lemma \ref{assignment}. Since $x'$ is within $B_j$, we can express it as $x'=L(B_j)+\delta=R(I_j)+\delta$. We can therefore express the cost function of $i$ at $x'$ as $c(L(B_j)+\delta,x_i)$ and the cost function of $j$ at $x'$ as $c(L(B_j)+\delta,x_j)$. Since $x_i<x_j$ and $n_i\leq n_j$ since $n_i=1$, we thus know that the cost function of $j$ at $x'$ will always be less than the cost function of $i$ at $x'$. Therefore, we have a contradiction since stack $i$ will not receive utility at position $x'$. Therefore, singleton stacks cannot receive utility from bubbles.
\end{proof}

The above lemma states that singleton stacks will never receive utility from a bubble. The following lemma establishes, further, that singleton stacks will never be {\em assigned} a bubble in a Nash equilibrium, meaning that the core interval of a singleton stack will never have an up-bubble directly to the right of it, or a down-bubble directly to the left of it. Although some singleton stacks may exist in the solution, they will not be assigned any bubbles in our mapping.


\begin{lemma}
If the current solution is a Nash equilibrium, then for each $B_i$ we have that $n_i\geq2$. 
\label{lem:no_single_stack}
\end{lemma}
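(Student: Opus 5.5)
We argue by contradiction. Suppose that in a Nash equilibrium some singleton stack $i$ (so $n_i=1$) is assigned a bubble $B_i$. By symmetry we may take $B_i$ to be an up-bubble lying immediately to the right of $I_i$. The plan is to exhibit a profitable deviation for the lone facility at $x_i$: it moves slightly to the right, to $x_i+\epsilon$, and we show its utility strictly increases.

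First we collect the structure. Lemma \ref{ss bubble} gives $U_i=I_i$, so the current utility is $u_i=|I_i|=R(I_i)-L(I_i)$. Next, as in the proof of Lemma \ref{cause bubble utility}, let $j$ be the stack whose cost is minimal just to the right of $R(I_i)$. Then $x_j<L(I_i)$ and $n_j>n_i=1$, so the slope $1/n_j$ of $c(\cdot,x_j)$ is strictly smaller than $1$. At $R(I_i)$ we have the tie $R(I_i)-x_i=c(R(I_i),x_j)$, and just beyond $R(I_i)$ stack $j$ wins. Finally, by Lemma \ref{LM lemma}, $\Lambda_i<M_i$, and $c_{-i}$ is concave on $I_i$ and lies above the linear interpolant of $\Lambda_i$ and $M_i$.

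Now compute the new core interval when the facility sits at $x_i+\epsilon$. Its cost function is $|x-x_i-\epsilon|$, and the competing cost is $c_{-i}$, which is unchanged by the move.

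On the right, the boundary is where $x-x_i-\epsilon=c_{-i}(x)$. Near $R(I_i)$ this competitor is $c(\cdot,x_j)$, with slope $1/n_j<1$. The boundary therefore moves right by $\epsilon/(1-1/n_j)>\epsilon$, so the gain on the right exceeds $\epsilon$.

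On the left, the boundary is where $x_i+\epsilon-x=c_{-i}(x)$. Let $s$ be the one-sided slope of $c_{-i}$ just right of $L(I_i)$. The left boundary moves right by $\epsilon/(1+s)$. This loss is at most $\epsilon$ provided $s\ge 0$, and strictly less than the right-side gain provided $s>-1+1/n_j$ (in the degenerate case $L(I_i)=0$ the left endpoint moves by at most... this case needs a separate check, but the loss is then $0$ if $x_i-L(I_i)$ is not the binding constraint).

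The main obstacle is controlling $s$, the slope of $c_{-i}$ at the left end. Here we use concavity together with $\Lambda_i<M_i$: a concave function whose right endpoint value exceeds its left endpoint value has left-endpoint slope at least the secant slope $(M_i-\Lambda_i)/|I_i|>0$. Hence $s>0$, the loss on the left is at most $\epsilon/(1+s)<\epsilon$, and the net change in utility is positive.

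Two technical points remain. First, for small $\epsilon$ the new set of clients using the deviator contains this shifted core interval, because the competitors are piecewise linear with finitely many pieces. Second, the deviator cannot lose clients elsewhere, since it had none outside $I_i$. Together these give a strict improvement in utility, contradicting the Nash equilibrium assumption. The case where $I_i$ touches the boundary of $X$ on the left only helps, since then there is no left competitor to lose clients to as the facility shifts.
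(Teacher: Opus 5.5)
Your proposal is correct and follows essentially the paper's argument: shift the singleton right by a small $\epsilon$, gain strictly more than $\epsilon$ on the right because the competing stack $j$ has $n_j\geq 2$ (slope below $1$), and lose at most $\epsilon$ on the left by the concavity of $c_{-i}$ together with $\Lambda_i<M_i$ from Lemma~\ref{LM lemma}; the paper uses the chord bound on all of $[L(I_i)+\epsilon,R(I_i)]$, which also settles the interior coverage that you only sketch via ``finitely many pieces.'' Two harmless slips: comparing the loss $\epsilon/(1+s)$ with the gain $\epsilon/(1-1/n_j)$ requires $s>-1/n_j$, not $s>-1+1/n_j$ (moot, since you prove $s>0$), and the case $L(I_i)=0$ cannot arise, because you already have $0\leq x_j<L(I_i)$.
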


\begin{figure*}[ht]
        \centering
        \includegraphics[width=0.8\textwidth]{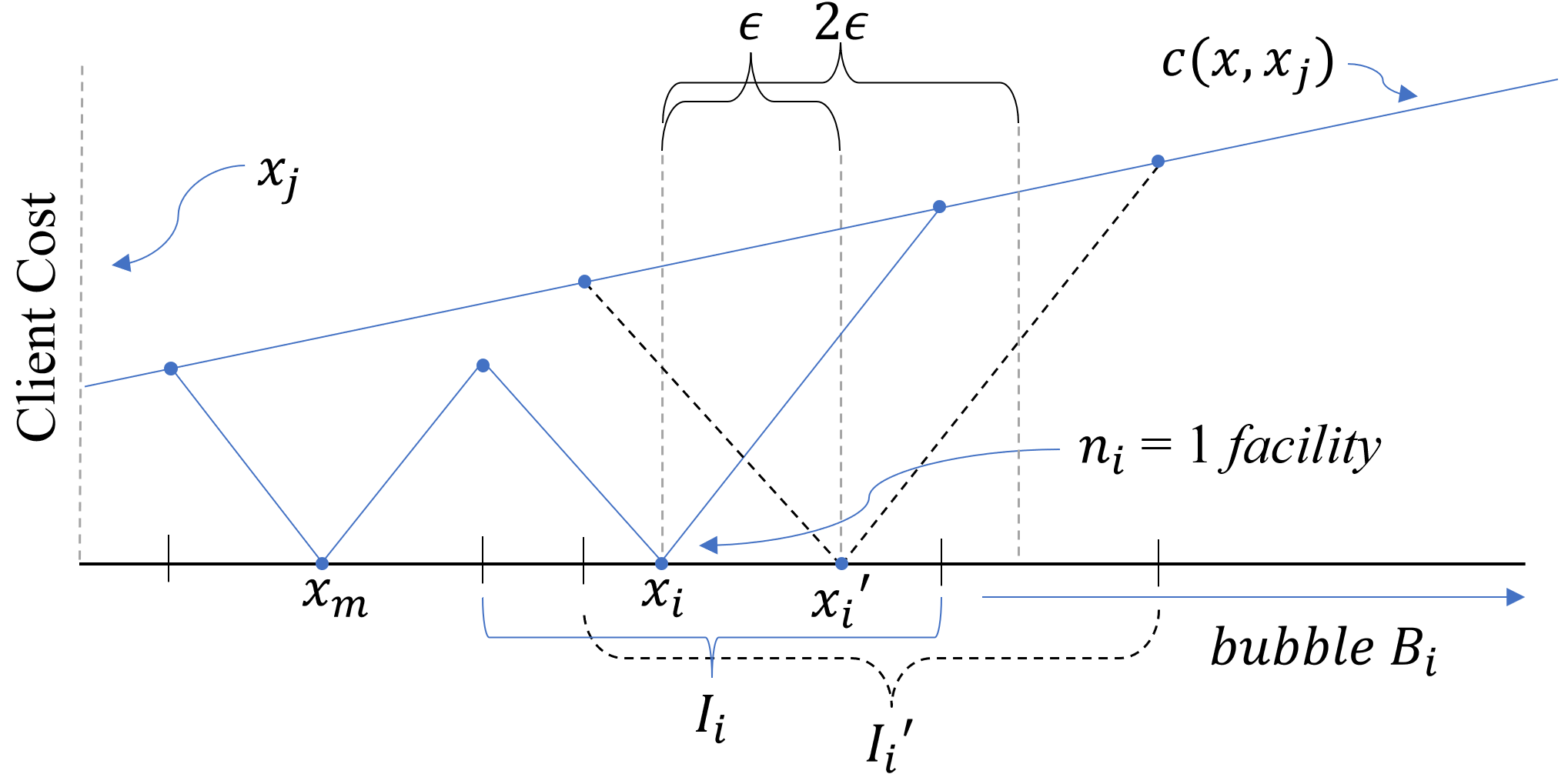}
        \caption{An illustration of the proof of Lemma \ref{lem:no_single_stack}.}
        \label{fig:no_single_stack}
\end{figure*}

\begin{proof}
This is a proof by contradiction. Without loss of generality, assume that there is an up-bubble, $B_i$, which is assigned to a stack, $i$, with $n_i=1$ in a pure Nash equilibrium. We will show that stack $i$ gains more utility by moving from position $x_i$ to position $x_i+\epsilon$ for some small $\epsilon>0$ and $\epsilon< |B_i|/2$. Therefore, this will not be a pure Nash equilibrium and will cause a contradiction. See Figure \ref{fig:no_single_stack} for an illustration of this proof.

Let $j$ be the stack such that $\mu_i=c(r_i)=c(r_i, x_j)$, $j\neq i$ and for some small $\gamma > 0$ we have that $c(r_i+\gamma)=c(r_i+\gamma, x_j)$. Let $m$ be the stack such that $\lambda_i=c(l_i)=c(l_i, x_m)$, $m\neq i$ and for some small $\gamma > 0$ we have that $c(l_i-\gamma)=c(r_i-\gamma, x_m)$. Note that $\Lambda_i=\lambda_i= c(L(I_i))=c(l_i)$ and $M_i=\mu_i= c(R(I_i))=c(r_i)$ due to Lemma \ref{ss bubble}. Let $\epsilon$ be small enough such that for every $\delta\in(0,2\epsilon)$, we have that $c(R(I_i)+\delta) = c(R(I_i)+\delta, x_j)$ and $c_{-i}(L(I_i)+\delta) = c_{-i}(L(I_i)+\delta, x_m)$. Let $i'$ represent stack $i$ after moving. Let $x_i'$ represent the position of the singleton stack $i'$ ($x_i' = x_i + \epsilon$) and let $I_i'$ represent the core interval of stack $i'$. We know that stack $i$ receives all of its utility from its core interval due to Lemma \ref{ss bubble}, so it is sufficient to only consider $I_i$ and $I_i'$.

Let $y$ be the point where $c(y,x_i')=c(y,x_j)$. First, we will show that all clients in the range $[R(I_i),\min\{y,R(I_i)+2\epsilon\}]$ will choose stack $i'$. By our assumption on the size of $\epsilon$ we know that $c_{-i}(x)=c_{-i}(x,x_j)$ for $x\in [R(I_i),R(I_i)+2\epsilon]$. Thus, for $x\in [R(I_i),R(I_i)+2\epsilon]$, we know that each client will use either stack $j$ or stack $i'$, since the costs of using all other stacks are larger than the cost of using stack $j$. Since our definition of $y$ is the point at which stack $i'$ will have a higher cost function than stack $j$, then we know that stack $i'$ will receive all of the clients in this interval range starting at $R(I_i)$ and ending at $y$. Therefore, we know that all clients in the range $[R(I_i),\min\{y,R(I_i)+2\epsilon\}]$ are assigned to $i'$.

We can also show that $y>R(I_i)+\epsilon$. This is because at position $R(I_i)$ we know that $c(R(I_i))=c(R(I_i),x_i')$ because by definition $c_{-i}(R(I_i))=\mu_i$ and $c(R(I_i),x_i')=\mu_i-\epsilon$ because $i$ is a singleton stack. Since the cost function of $j$ is increasing in $B_i$, because $B_i$ is an up-bubble, stack $i'$ will have the lowest cost function for an interval of size $\epsilon$ until it will have cost $\mu_i$ and then will continue to be the lowest cost function for some nonzero interval because the cost function of $j$ has increased as well. Thus, we have that $y>R(I_i)+\epsilon$.

Now we will show that all clients in the range $[L(I_i)+\epsilon,R(I_i)]$ will choose stack $i'$. Note that we can assume that $c(x,x_i)=d(x,x_i)$ and $c(x,x_i')=d(x,x_i')$ because $n_i=1$. By Lemma \ref{LM lemma}, in this region for a point $p=L(I_i)+\delta$, we know that $c_{-i}(p)\geq \Lambda_i+\delta\left[\frac{M_i-\Lambda_i}{|I_i|}\right]$ . Within the sub-range $p\in [L(I_i)+\epsilon,x_i']$ the cost function for stack $i'$ is $c(p,x_i')=d(p,x_i')=x_i'-(L(I_i)+\delta)=[x_i-L(I_i)]+\epsilon-\delta=
\Lambda_i+\epsilon-\delta\le\Lambda_i$, which is less than the bound above, which means that all the clients in this sub-range choose stack $i'$. Next, within the sub-range $p\in [x_i', R(I_i)]$ the cost function for stack $i'$ is $c(p,x_i')=d(p,x_i')=(L(I_i)+\delta)-(L(I_i)+d(L(I_i),x_i'))=(L(I_i)+\delta)-(L(I_i)+\Lambda_i+\epsilon)=\delta-\epsilon-\Lambda_i$. This quantity is less than the bound on $c_{-i}(p)$ above because $|I_i|=M_i + \Lambda_i$ due to $i$ being a singleton stack and $\delta\leq |I_i|$ by definition, which implies,
$$\delta-\epsilon-\Lambda_i = \left[\frac{M_i+\Lambda_i}{|I_i|}\right]\delta-\epsilon-\Lambda_i = \left[\frac{M_i-\Lambda_i}{|I_i|}\right]\delta + \left[\frac{2\Lambda_i}{|I_i|}\right]\delta - \epsilon-\Lambda_i\le \left[\frac{M_i-\Lambda_i}{|I_i|}\right]\delta + 2\Lambda_i - \epsilon-\Lambda_i.$$
This means that the clients in this sub-range all choose stack $i'$ as well. Thus, we have shown that all clients in the range $[L(I_i)+\epsilon,R(I_i)]$ will choose stack $i'$.

Since we have shown that all clients in the range $[L(I_i)+\epsilon,\min\{y,R(I_i)+2\epsilon\}]$ would choose stack $i'$ and that $y>R(I_i)+\epsilon$, we can conclude that $|I_i'|>|I_i|$ since $|I_i|=|[L(I_i)+\epsilon,R(I_i)+\epsilon]|$. Since $|I_i'|$ is strictly larger than $|I_i|$, the utility received by stack $i$ increased, giving it an incentive to deviate. Therefore, this is not a pure Nash equilibrium, causing a contradiction. Since $n_i\neq 1$ for any given $B_i$, we now know that for each $B_i$ we have $n_i\geq 2$.
\end{proof}

Using the above lemmas, we are able to prove the key property which makes our bound on the Price of Anarchy possible:

\begin{lemma}
\label{1 4}
If the current solution is a Nash equilibrium, then for each bubble $B_i$ we have that $|B_i|\leq\frac{1}{4}u_i$.
\end{lemma}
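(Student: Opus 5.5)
The plan is to argue by contradiction. Assume, without loss of generality, that $B_i$ is an up-bubble directly to the right of $I_i$ and that $|B_i|>\frac14 u_i$. I will exhibit a deviation by a single facility of stack $i$ that strictly increases its utility above $u_i/n_i$. By Lemma \ref{lem:no_single_stack} we have $n_i\ge 2$, so stack $i$ does not disappear when one facility leaves it. By Lemma \ref{cause bubble utility} we have $r_i=R(I_i)$. Writing $\mu_i=M_i=c(R(I_i))$, the right part of the core satisfies $R(I_i)-x_i=n_i\mu_i$.

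The first step is to pin down the geometry of the bubble, following the proofs of Lemmas \ref{cause bubble utility} and \ref{lm lemma}. On $B_i$ the cost $c(x)$ is increasing and is served by stacks lying to the left of $L(I_i)$. Each of these stacks has size larger than $n_i$, so its slope is below $1/n_i$. Hence for every $x\in B_i$,
$$c(x)\ \ge\ \mu_i,$$
and $c$ is nondecreasing along $B_i$. The stacks serving $B_i$ lie to the left of $x_i$, so removing one facility from stack $i$ does not lower $c$ on $B_i$.

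The second step is the deviation itself: one facility of stack $i$ moves to a point $p$ inside (or at the right end of) $B_i$. The new singleton at $p$ captures every client $x$ with $|x-p|<c_{-}(x)$, where $c_{-}$ is the cost after the move. On $B_i$ this cost is still at least $\mu_i$ and increasing to the right. I will choose $p$ so that the singleton captures a window of length roughly $\min\{|B_i|,\,2\mu_i\}$ plus a share to its right, where the cost keeps rising. I also expect to capture part of the right end of $I_i$: the reduced stack of $n_i-1$ facilities has cost $(x-x_i)/(n_i-1)$, which exceeds $\mu_i$ near $R(I_i)$. The singleton's gain must then be compared with $u_i/n_i$, after bounding $u_i$ from above. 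For this I will use $u_i=|I_i|+(\text{left bubbles})$ and Lemma \ref{lm lemma}, giving $u_i\le$ (left part) $+\,n_i\mu_i$. Proposition \ref{claim:balanced} supplies an alternative bound when it is convenient. I then need to show that $\text{gain}>u_i/n_i$ whenever $|B_i|>u_i/4$.

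The main obstacle is this quantitative comparison. It has to be tight: Example \ref{ex:bubbles} with $n_i=2$, $u_i\approx4$ and $|B_i|\approx1$ must be the equality case. A naive capture bound will not give $1/4$, so the choice of $p$, and the accounting of clients the singleton steals from the reduced stack inside $I_i$, must be optimized jointly. I will try first the position $p$ where the singleton's cone just touches $c$ at the right end of the bubble. If this fails, I will instead consider the alternative deviation of a facility from the serving stack $j$ moving onto $x_i$, and use whichever of the two bounds is stronger in each range of $n_i$.
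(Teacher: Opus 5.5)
\textbf{There is a genuine gap.} Your framework matches the paper's: one facility of stack $i$ deviates, and its new share is compared with $u_i/n_i=z_i+\mu_i$, where $z_i$ is the per-facility share coming from $[l_i,x_i]$. But you defer exactly the step that constitutes the proof (``I will try first \ldots; if this fails \ldots''), and you lack the ingredient that controls that step.

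Your Step 1 bounds $c$ only on $B_i$, which tells you what the deviator can win to its right. The lemma is decided by what it wins to its \emph{left}, inside $[l_i,r_i]$. There it competes not only with the reduced stack but with $c_{-A}$, the cost offered by stacks outside $A$ (the set of stacks in $[l_i,r_i]$). Nothing in your plan prevents $c_{-A}$ from being small there.

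The positions you propose also work against you. Moving right into $B_i$ can only shrink what the deviator wins inside $[l_i,r_i]$. If $B_i$ ends at the boundary of $X$, as in Example \ref{ex:bubbles}, nothing on the right compensates: at the right end of that bubble the deviator wins only about $\mu_i$, against $u_i/n_i\approx 2\mu_i$.

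The paper instead places the deviator at $x_j=R(I_i)=L(B_i)$. It uses the linear-interpolant bound of Lemma \ref{lm lemma}, with $r_i-l_i=n_i(\lambda_i+\mu_i)$, to get $c_{-A}\ge\alpha_i\mu_i\ge c(x,x_j)$ on $V=[x_j-\alpha_i\mu_i,x_j]$, where $\alpha_i=\frac{n_i(\lambda_i+\mu_i)}{n_i(\lambda_i+\mu_i)+\mu_i-\lambda_i}$. The reduced stack has cost exactly $\mu_i$ at $x_j-\mu_i$. The other stacks of $A$ are smaller than $n_i$ and lie left of $x_i$, so they are dominated. Hence the deviator wins all of $V$.

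The second missing ingredient is the coupling between what the deviator had and what it can take: $Z_i\le d(l_i,x_i)=n_i\lambda_i$, so $z_i\le\lambda_i$. A large left share forces a large $\lambda_i$, which lifts $c_{-A}$ across $[l_i,r_i]$ and pushes $\alpha_i$ toward $1$. Neither your bound $u_i\le(\text{left part})+n_i\mu_i$ nor Proposition \ref{claim:balanced} sees this trade-off, and without some such link the comparison cannot close at the constant $\frac14$. With it, the argument closes in two cases:
\begin{enumerate}[I.]
\item If $|B_i|>\mu_i$, the deviator wins more than $\mu_i$ on the right. Its total therefore exceeds $\mu_i+\alpha_i\mu_i\ge\mu_i+\lambda_i\ge\mu_i+z_i$, contradicting equilibrium. (The inequality $\lambda_i\le\alpha_i\mu_i$ is a direct computation.)
\item If $|B_i|\le\mu_i$, the deviator wins all of $B_i$, so $|B_i|\le z_i+(1-\alpha_i)\mu_i$. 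Using $n_i\ge 2$ and $z_i\le\lambda_i<\mu_i$, this is at most $\frac{n_i}{4}(z_i+\mu_i)=\frac{u_i}{4}$.
\end{enumerate}
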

\begin{proof}
Without loss of generality, let $B_i$ be an up-bubble. Let $Z_i$ be defined as the utility received by stack $i$ within the interval $[l_i,x_i]$. Let $z_i=\frac{Z_i}{n_i}$ (i.e., the amount of utility received by each facility in stack $i$ in that range).

 Consider what happens if a single facility $j$ dissents from stack $i$ and moves to position $x_j=R(I_i)$. Note that $r_i=R(I_i)$ for stack $i$, due to Lemma \ref{cause bubble utility}. We will use variables $n_i$, $u_i$, $\lambda_i$, $\mu_i$, $z_i$, $l_i$, $r_i$ and $I_i$ to represent the values of these variables before facility $j$ moves. By the definitions of $z_i$ and $\mu_i$ we know that $\mu_i=c(R(I_i),x_i) = d(R(I_i),x_i)/n_i$ and thus $d(R(I_i),x_i)=n_i\mu_i$. Therefore, before dissenting, facility $j$ received $z_i+\mu_i$ utility because stack $i$ does not receive any utility to the right of $R(I_i)$ due to Lemma \ref{cause bubble utility redux}. This also means that the new location $x_j$ of stack $j$ is $x_j=R(I_i)=x_i+d(x_i,R(I_i)) = x_i+n_i\mu_i$. 

\begin{figure*}[ht]
        \centering
        \includegraphics[width=0.8\textwidth]{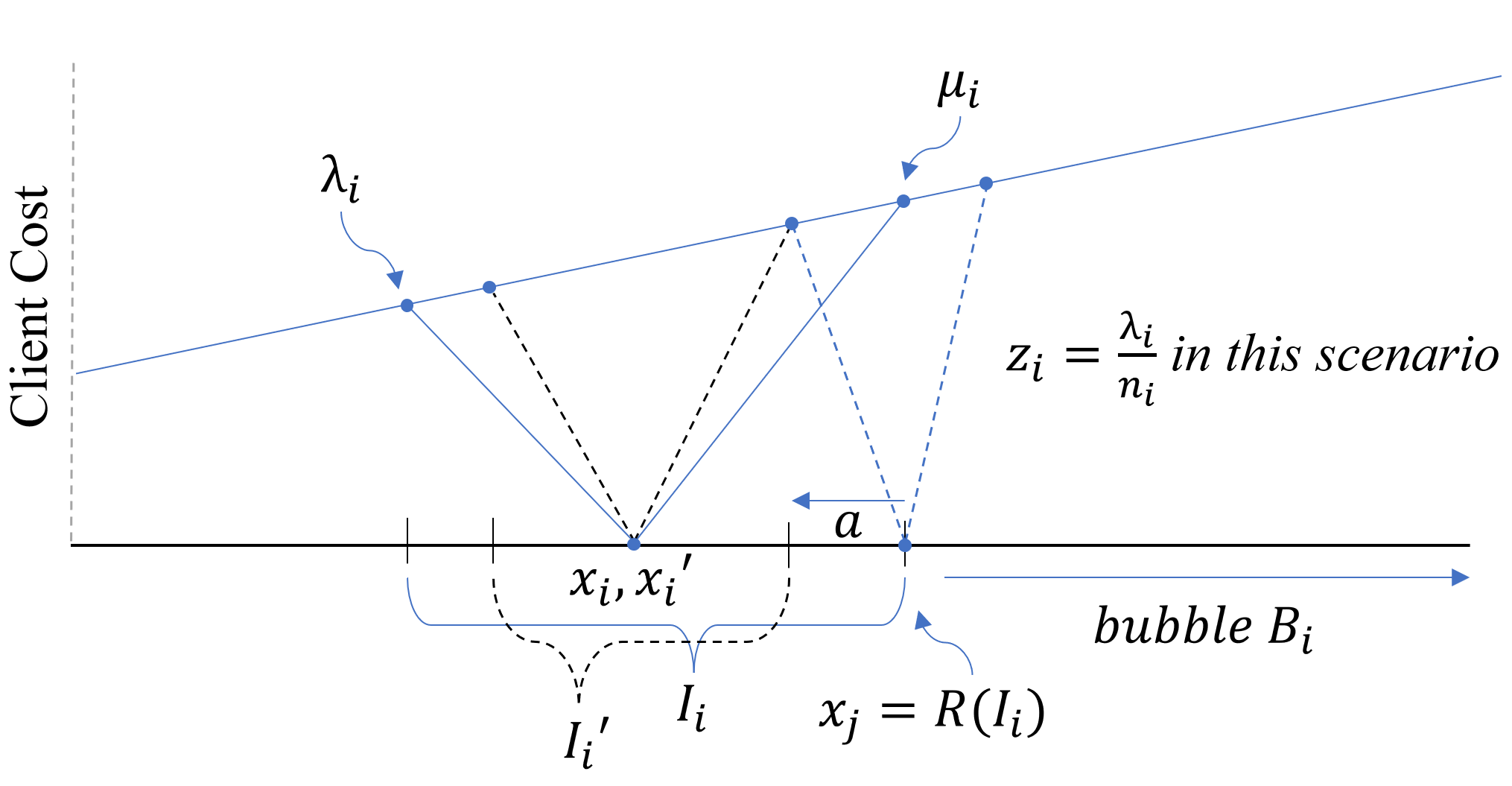}
        \caption{An illustration of the proof of Lemma \ref{1 4}.}
        \label{fig:1 4}
\end{figure*}

Let $i'$ represent the stack $i$ after facility $j$ dissents; it is still at location $x_{i'}=x_i$ but has size $n_{i'}=n_i-1$. Let $a$ be defined as the utility stack $j$ will receive on its left hand side (i.e., the size of the interval $[l_j,x_j]$). See Figure \ref{fig:1 4} for an illustration of the notation in this proof.

$$\text{Let }\alpha_i=\left[\frac{n_i(\lambda_i+\mu_i)}{n_i(\lambda_i+\mu_i)+\mu_i-\lambda_i}\right]$$
We will show that $a\geq\alpha_i\mu_i$ by considering the interval $[x_j-\alpha_i\mu_i,x_j]$ (denoted by $V$), and showing that stack $j$ has the smallest cost function within this range. Since $\lambda_i<\mu_i$ due to Lemma \ref{lm lemma}, we know that $\alpha_i<1$. Thus, within interval $V$, the cost function of $j$ is strictly less than $\mu_i$. The cost function of $i'$ in this range is at all points greater than $\mu_i$ because at position $x_j-\mu_i$ the cost function is $c(x_j-\mu_i,x_{i'})=
\frac{d(x_j-\mu_i,x_i)}{n_i-1} = 
\frac{x_i+ n_i\mu_i-\mu_i - x_i}{n_i-1}=\mu_i$ and the cost function within interval $V$ is strictly larger since it is farther from $x_i$. The cost  functions of all other stacks in $A$ (i.e., stacks located between $l_i$ and $r_i$) also have cost functions larger than $\mu_i$ in this interval $V$. To see why this is true, first note that all such stacks $x_m$ must have values of $n_m$ such that $n_m<n_i$ (or $n_m\leq n_{i'}$), since otherwise the stack $x_i$ would never have obtained any utility beyond point $x_m$, contradicting the definition of $l_i$ and $r_i$. Since $r_i=R(I_i)$, all such stacks $x_m$ are positioned to the left of $x_i$ and they all have values of $n_m\leq n_{i'}$. Thus, their cost functions will always be larger than the cost function of $i'$ in interval $V$, and thus will all be larger than the cost function of $j$. Lastly, we can also show that $c_{-A}(x)\geq c(x,x_j)$ within $V$ due to Lemma \ref{lm lemma}. Let $x=x_j-\alpha_i\mu_i$.
\begin{align*}
    c_{-A}(x) &\geq \mu_i-\alpha_i\mu_i\left[\frac{\mu_i-\lambda_i}{r_i-l_i}\right] \\
    &=\mu_i-\alpha_i\mu_i\left[\frac{\mu_i-\lambda_i}{n_i(\lambda_i+\mu_i)}\right] \\
    &=\mu_i\left(1-\left[\frac{n_i(\lambda_i+\mu_i)}{n_i(\lambda_i+\mu_i)+\mu_i-\lambda_i}\right]\left[\frac{\mu_i-\lambda_i}{n_i(\lambda_i+\mu_i)}\right]\right) \\
    &=\mu_i\left(1-\left[\frac{\mu_i-\lambda_i}{n_i(\lambda_i+\mu_i)+\mu_i-\lambda_i}\right]\right) \\
    &=\mu_i\left[\frac{n_i(\lambda_i+\mu_i)}{n_i(\lambda_i+\mu_i)+\mu_i-\lambda_i}\right] \\
    &=\alpha_i\mu_i
\end{align*}
Since within interval $V$, we now know that $c_{-A}(x)\geq\alpha_i\mu_i$ and $c(x,x_j)\leq\alpha_i\mu_i$, then we know that $c(x,x_j) \leq c_{-A}(x)$ for all $x\in V$. Thus, the cost function of $x_j$ is the minimum one on interval $V$, as compared to the cost functions of $i'$, all stacks in $A\setminus \{i\}$, and $c_{-A}(x)$, so we can conclude that this entire interval uses stack $j$ and thus $a\geq\alpha_i\mu_i$.

Since this solution is a Nash equilibrium, we know that facility $j$ must not receive more utility after dissenting than it did beforehand. Consider the following two exhaustive cases:
\begin{enumerate}[I.]
    \item $|B_i|\leq\mu_i$
    \item $|B_i|>\mu_i$
\end{enumerate}
For Case I, the utility received by facility $j$ will be at least $|B_i|+\alpha_i\mu_i$. This is because all of the clients in $B_i$ will be assigned to facility $j$ because facility $j$ is positioned at $L(B_i)$, $c(L(B_i))=\mu_i$, and the cost function in $B_i$ is increasing. Therefore, after moving to $L(B_i)$, facility $j$ will have the smallest cost function within at least all of $B_i$ since it will begin at $0$ and grow to less than $\mu_i$ through $B_i$. Thus, $|B_i|+\alpha_i\mu_i\leq z_i+\mu_i$ due to the Nash equilibrium qualification, which means that we can bound $|B_i|$ with the following inequality:
$$|B_i|\leq \mu_i\left[1-\alpha_i\right]+z_i.$$
For Case II, we will show that the utility received by facility $j$ will exceed $z_i+\mu_i$, which means that Case II cannot occur in a pure Nash equilibrium. We know that the cost function within $B_i$ is strictly larger than $\mu_i$ because this is an up-bubble where the cost function begins as $\mu_i$. Therefore, we know that the distance between $j$ and the new $R(I_j)$ must be strictly greater than $\mu_i$. Hence, we will let $\mu_i+\gamma$ be defined as the utility received by facility $j$ on its right hand side where $\gamma>0$. Thus, the total utility received by facility $j$ is at least $\alpha_i\mu_i+\mu_i+\gamma$. Since $\gamma$ can be arbitrarily small, to show that this cannot occur in a pure Nash equilibrium, we must show that $z_i\leq\alpha_i\mu_i$. First, we can show that $z_i<\lambda_i$. We have that $d(l_i,x_i)=n_i\lambda_i$ by the definition of $\lambda_i$. Thus, there can be at most $d(l_i,x_i)$ utility for the stack to the left of $x_i$ by the definition of $u_i$. Thus, we have $z_i=\frac{Z_i}{n_i}\leq\frac{n_i\lambda_i}{n_i}=\lambda_i$. Thus, it is sufficient to show that $\lambda_i\leq\alpha_i\mu_i$. We prove this below:
\begin{align*}
\lambda_i &\leq n_i\lambda_i\\
\lambda_i &\leq n_i(\mu_i + \lambda_i)\\
\lambda_i(\mu_i-\lambda_i) &\leq n_i(\mu_i + \lambda_i)(\mu_i - \lambda_i) \\
\lambda_i\mu_i-\lambda_i^2 &\leq n_i\mu_i^2 - n_i\lambda_i^2 \\
n_i\lambda_i^2+\lambda_i\mu_i-\lambda_i^2 &\leq n_i\mu_i^2 \\
n_i\lambda_i^2+n_i\lambda_i\mu_i+\lambda_i\mu_i-\lambda_i^2 &\leq n_i\lambda_i\mu_i+n_i\mu_i^2 \\
\lambda_i(n_i(\lambda_i+\mu_i)+\mu_i-\lambda_i) &\leq \mu_i(n_i(\lambda_i+\mu_i)) \\
\lambda_i &\leq \mu_i\left[\frac{n_i(\lambda_i+\mu_i)}{n_i(\lambda_i+\mu_i)+\mu_i-\lambda_i}\right] \\
\lambda_i &\leq\alpha_i\mu_i
\end{align*}
Therefore, we have shown that $\lambda_i\leq\alpha_i\mu_i$, and by the argument above, Case II cannot occur in a pure Nash equilibrium.

Considering both cases, we now have a bound on $|B_i|$ and therefore a bound on $\frac{|B_i|}{u_i}$ of:
$$\frac{|B_i|}{u_i} = \frac{|B_i|}{n_i z_i+n_i\mu_i}\leq\frac{\mu_i\left[\frac{\mu_i-\lambda_i}{n_i(\lambda_i+\mu_i)+\mu_i-\lambda_i}\right]+z_i}{n_i z_i+n_i\mu_i}.$$
In order to find a bound that will always hold, we must make it as loose as possible. This bound is the loosest when $n_i=2$ because $n_i\geq2$ (due to Lemma \ref{lem:no_single_stack}) and $n_i$ is only in denominators. Thus, we will set $n_i=2$ giving us:
$$\frac{|B_i|}{n_i z_i+n_i\mu_i}\leq\frac{1}{2}\cdot\frac{\mu_i\left[\frac{\mu_i-\lambda_i}{3\mu_i+\lambda_i}\right]+z_i}{z_i+\mu_i}.$$
The derivative of this bound with respect to $\lambda_i$ is always negative, so to achieve the loosest bound, we set $\lambda_i=z_i$ since $z_i$ is a lower bound on $\lambda_i$ giving us:
$$\frac{|B_i|}{n_i z_i+n_i\mu_i}\leq\frac{1}{2}\cdot\frac{\mu_i\left[\frac{\mu_i-z_i}{3\mu_i+z_i}\right]+z_i}{z_i+\mu_i}.$$
Next, the derivative of this bound with respect to $z_i$ is always positive, so to achieve the loosest bound we set $z_i=\mu_i$ since $\mu_i$ is an upper bound on $z_i$ due to Lemma \ref{lm lemma} and the definition of $z_i$ giving us:
$$\frac{|B_i|}{n_i z_i+n_i\mu_i}\leq\frac{1}{2}\cdot\frac{\mu_i\left[\frac{\mu_i-\mu_i}{3\mu_i+\mu_i}\right]+\mu_i}{\mu_i+\mu_i}=\frac{1}{2}\cdot\frac{0+\mu_i}{2\mu_i}=\frac{1}{2}\cdot\frac{1}{2}=\frac{1}{4}.$$
Therefore, since our choice of $i$ was arbitrary, $\frac{|B_i|}{n_i z_i+n_i\mu_i}\leq\frac{1}{4}$. Since $u_i = n_i z_i+n_i\mu_i$, we have shown that $|B_i|\leq\frac{1}{4}u_i$.
\end{proof} \bigskip

\subsection*{Bounding the Price of Anarchy}

Now that we have established that bubbles cannot be too large, we can proceed to bound the costs of the clients in a Nash equilibrium solution. Consider an arbitrary Nash equilibrium solution called $S$, with stack locations $x_i$ and stack sizes $n_i$. Let $c_{S}$ represent the total cost of solution $S$, and let $N$ be the set of all stacks for the solution $S$. Also, let $c_{O}$ be the total cost of the optimal solution; we know by Proposition \ref{claim:proportional} that $c_{O}=\frac{1}{4k}$. To compare $c_{S}$ and $c_{O}$, we compare them to several different intermediate quantities, as follows.

\begin{definition}
Let $c_{1}$ be defined as:
$$c_1=\sum_{i\in N} \frac{u_i^2}{2n_i}$$
where each $u_i$ and $n_i$ represent the same utilities and stack sizes as in solution $S$.
\end{definition}

\begin{definition}
Let $c_{2}$ be defined as:
$$c_2=\sum_{i\in N} \left[\int_{x\in I_i}c(x,x_i)~dx + \int_{x\in B_i}c(x,x_i)~dx\right].$$
\end{definition}

Note that $c_1$ and $c_2$ do not represent values of any actual solutions; they are only convenient values for comparing the total cost of equilibrium and optimal solutions. For intuition on the quantity $c_1$, consider a stack with utility $u_i$, where all the clients using this stack lie in the core interval (i.e., there are no bubbles). Then this interval $I$ has size $|I|=u_i$, and the quantity $\frac{u_i^2}{2n_i}$ is an upper bound on the total cost experienced by the clients of $I$: in the worst case the stack would be located at the very start of interval $I$, and thus the  cost of the clients in this interval is $\int_{x=0}^{u_i}\frac{x}{n_i}dx = \frac{u_i^2}{2n_i}.$ So, in some sense, $c_1$ represents the worst case cost if the $u_i$ values were the same as in $S$, but there were no bubbles.

\begin{lemma}
\label{ordering}
$c_{S}\leq c_{2}$
\end{lemma}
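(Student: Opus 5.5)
We prove $c_S\le c_2$ by partitioning $X$ into the core intervals $I_i$ and the bubbles, and comparing the contributions of each piece to $c_S=\int_X c(x)\,dx$ and to $c_2$.

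\textbf{Step 1: a partition of $X$.} By the Corollary to Lemma \ref{monotonic}, the region between two neighboring core intervals has a cost function that is increasing, decreasing, or increasing then decreasing. Hence, up to finitely many boundary points, this region splits into at most one up-bubble followed by at most one down-bubble. The endpoint regions of $X$ (before the first or after the last core interval) consist of a single down-bubble or a single up-bubble, respectively. By the well-definedness part of Lemma \ref{assignment}, every bubble is assigned to some core interval, and by injectivity, each core interval receives at most one bubble. So up to measure zero, $X=\bigsqcup_i I_i \sqcup \bigsqcup_i B_i$, where $B_i$ is taken to be empty when no bubble is assigned to $I_i$.

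\textbf{Step 2: the core intervals.} For $x\in I_i$, every client uses stack $i$, so $c(x)=c(x,x_i)$. The contribution of $I_i$ to $c_S$ therefore equals the first integral in the $i$-th summand of $c_2$.

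\textbf{Step 3: the bubbles.} This is the only step with content. For $x\in B_i$ we have $c(x)=\min_j c(x,x_j)\le c(x,x_i)$, simply because $c$ is a minimum over all stacks. It does not matter that the clients in $B_i$ actually use other stacks. Integrating gives $\int_{B_i}c(x)\,dx\le\int_{B_i}c(x,x_i)\,dx$.

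\textbf{Step 4: summing up.} Adding Steps 2 and 3 over all $i\in N$, and using the partition from Step 1, yields $c_S\le c_2$.

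The main obstacle is Step 1: making rigorous that the regions outside core intervals decompose exactly into bubbles, each assigned to a distinct core interval, so that nothing is double-counted or missed. Between consecutive core intervals, the maximal monotone pieces are precisely the bubbles. An increasing piece must touch the left core interval, and a decreasing piece must touch the right one, because of the unimodal shape given by the Corollary to Lemma \ref{monotonic}. That is exactly the assignment rule, so every bubble is covered once.
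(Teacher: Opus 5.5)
Your proof is correct and takes the same approach as the paper: core intervals contribute identically to $c_S$ and $c_2$, and on each bubble $B_i$ the actual cost $c(x)$ is at most $c(x,x_i)$. You also spell out the partition of $X$ into core intervals and their assigned bubbles, which the paper uses only implicitly, and you correctly note that the weak inequality follows directly from $c$ being a minimum, so the paper's appeal to Lemma~\ref{cause bubble utility} for a strict inequality is not needed.
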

\begin{proof}
The cost of all clients in core intervals remains the same between both total costs, thus, the only difference is in the cost of the clients in bubbles. For any bubble $B_i$, the cost $c(x)$ of any given client $x\in B_i$ must be less than $c(x,x_i)$ or else that client would have chosen stack $i$ which we know is not the case due to Lemma \ref{cause bubble utility}. Thus, the value of $c_2$ must be larger since it is increasing the cost of all clients in bubbles.
\end{proof}

\begin{lemma}
\label{25 16}
$c_{2}\leq\frac{25}{16}c_{1}$
\end{lemma}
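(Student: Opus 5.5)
The plan is to prove the bound stack by stack: for each $i\in N$, show
$$\int_{x\in I_i}c(x,x_i)\,dx+\int_{x\in B_i}c(x,x_i)\,dx\le \frac{25}{16}\cdot\frac{u_i^2}{2n_i},$$
and then sum over $i$. If stack $i$ has no assigned bubble, take $B_i=\emptyset$. Lemma \ref{assignment} guarantees that the second integral involves at most one bubble per stack, so the sum of the left-hand sides is exactly $c_2$.

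For the core-interval term, note that $I_i\subseteq U_i$, so $|I_i|\le u_i$. With $a=x_i-L(I_i)$ and $b=R(I_i)-x_i$, the integral equals $\frac{a^2+b^2}{2n_i}$. This is at most $\frac{(a+b)^2}{2n_i}=\frac{|I_i|^2}{2n_i}$, with equality when the stack sits at an endpoint of its core interval.

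For the bubble term, assume by symmetry that $B_i$ is an up-bubble. Then $B_i$ lies immediately to the right of $R(I_i)$, so for $x\in B_i$ we have $c(x,x_i)=(x-x_i)/n_i$, where $x-x_i$ ranges over $[b,\,b+|B_i|]$. Set $\beta=|B_i|$. The key decision is how to handle the fact that core and bubble together extend contiguously from $L(I_i)$ to $R(B_i)$: the combined cost is
$$\frac{1}{n_i}\left(\frac{a^2}{2}+\frac{(b+\beta)^2}{2}\right)\le \frac{(a+b+\beta)^2}{2n_i}=\frac{(|I_i|+\beta)^2}{2n_i}.$$
It therefore suffices to show $|I_i|+\beta\le \frac54 u_i$. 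Since $I_i\subseteq U_i$, we have $|I_i|\le u_i$, and Lemma \ref{1 4} gives $\beta\le\frac14 u_i$. Combining these yields $(|I_i|+\beta)^2\le\frac{25}{16}u_i^2$, which is exactly the target ratio $\frac{25}{16}$. This strongly suggests that this is the intended route. Down-bubbles are handled by the mirror argument, using Corollary \ref{lm corollary} and the down-bubble half of Lemma \ref{cause bubble utility}.

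The main obstacle is confirming that $B_i$ and $I_i$ are disjoint and contiguous. They must not overlap, which holds because bubbles are disjoint from core intervals. They must also be adjacent, so that the distance from $x_i$ to points of $B_i$ runs continuously from $b$ to $b+\beta$ without any gap; this is built into the assignment, since an up-bubble is assigned to the core interval immediately to its left. With both facts in hand, the inequality $a^2+(b+\beta)^2\le(a+b+\beta)^2$ is elementary because $a,b,\beta\ge 0$. One more point needs checking: we must not require $|I_i|+\beta\le u_i$, which can fail since the bubble clients are not in $U_i$. The argument above avoids this by bounding the two lengths separately. Summing the per-stack bounds then gives $c_2\le\frac{25}{16}c_1$.
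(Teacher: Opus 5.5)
Your proposal is correct and follows the paper's proof essentially verbatim. Stack by stack, you write the combined core-plus-bubble cost exactly as $\frac{(x_i-L(I_i))^2+(R(I_i)-x_i+|B_i|)^2}{2n_i}$, bound it by $\frac{(|I_i|+|B_i|)^2}{2n_i}$, and then bound the two lengths separately via $|I_i|\le u_i$ and Lemma~\ref{1 4}, which is precisely the paper's chain of inequalities leading to $\frac{25}{16}$.
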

\begin{proof}
We will compare the values within both summations for any given $i$. Let $c_{2,i}$ and $c_{1,i}$ be defined as:
$$c_{2,i}=\int_{x\in I_i}c(x,x_i)~dx + \int_{x\in B_i}c(x,x_i)~dx \qquad \text{and} \qquad c_{1,i}=\frac{u_i^2}{2n_i}.$$
We will, therefore, show that $c_{2,i}\leq\frac{25}{16}c_{1,i}$. Without loss of generality, assume that $B_i$ is an up-bubble. Recall that $L(I_i)$ and $R(I_i)$ denote the leftmost and rightmost points of interval $I_i$.
\begin{align*}
    c_{2,i} &= \int_{x\in I_i}\frac{|x-x_i|}{n_i}~dx + \int_{x\in B_i}\frac{|x-x_i|}{n_i}~dx \\
    &= \frac{(x_i-L(I_i))^2}{2n_i} + \frac{(R(I_i)-x_i+|B_i|)^2}{2n_i} \\
    &\leq \frac{(x_i-L(I_i) + R(I_i)-x_i+|B_i|)^2}{2n_i} \\
    &= \frac{(R(I_i)-L(I_i)+|B_i|)^2}{2n_i} \\
    &= \frac{(|I_i|+|B_i|)^2}{2n_i} \tag{by the definition of $I_i$} \\
    &\leq \frac{(u_i+|B_i|)^2}{2n_i} \tag{by the definition of $u_i$} \\
    &\leq \frac{(u_i+\frac{u_i}{4})^2}{2n_i}  \tag{by Lemma \ref{1 4}} \\
    &= \frac{25}{16}\cdot\frac{u_i^2}{2n_i} \\
    &= \frac{25}{16}c_{1,i} \tag{by the definition of $c_{1,i}$}
\end{align*}
Since this is true for all $i\in N$, we have shown that $c_{2}\leq\frac{25}{16}c_{1}$.
\end{proof}

\begin{lemma}
\label{9 4}
$c_{1}\leq\frac{9}{4}c_{O}$
\end{lemma}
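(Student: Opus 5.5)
The plan is to bound $c_1=\sum_{i\in N}\frac{u_i^2}{2n_i}$ directly in terms of $k$, using only the balancing property of Proposition \ref{claim:balanced} together with $\sum_i u_i=1$ and $\sum_i n_i=k$. The target is $c_1\le \frac{9}{16k}$. Since $c_O=\frac{1}{4k}$ by Proposition \ref{claim:proportional}, this is exactly $\frac{9}{4}c_O$.

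First I parametrize the utilities relative to the worst-off facility. Let $m=\min_{i\in N} u_i/n_i$. For every stack $j$, Proposition \ref{claim:balanced} applied with $i$ the minimizing stack gives $m\ge \frac{u_j}{n_j+1}$. Together with $u_j\ge m n_j$, which holds by the definition of $m$, this gives
$$m n_j\le u_j\le m(n_j+1).$$
So I can write $u_j=m(n_j+\theta_j)$ with $\theta_j\in[0,1]$. Summing over $j$ and using $\sum_j u_j=1$ and $\sum_j n_j=k$ gives $1=m(k+s)$, where $s=\sum_j\theta_j\ge 0$. Hence $m=\frac{1}{k+s}$.

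Next I bound each term. We have $\frac{u_j^2}{n_j}=m^2\left(n_j+2\theta_j+\frac{\theta_j^2}{n_j}\right)$. Since $\theta_j\le 1$ and $n_j\ge1$, we have $\frac{\theta_j^2}{n_j}\le\theta_j$, so $\frac{u_j^2}{n_j}\le m^2(n_j+3\theta_j)$. Summing over $j$,
$$c_1\le \frac{m^2}{2}(k+3s)=\frac{k+3s}{2(k+s)^2}.$$
This bypasses any vertex or extremal-point argument: the constraint structure is absorbed entirely into the single parameter $s$.

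Finally I maximize $f(s)=\frac{k+3s}{(k+s)^2}$ over $s\ge0$. Its derivative has the sign of $3(k+s)-2(k+3s)=k-3s$, so the maximum is at $s=k/3$. There
$$f(k/3)=\frac{2k}{(4k/3)^2}=\frac{9}{8k},$$
and therefore $c_1\le\frac{9}{16k}=\frac{9}{4}c_O$.

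The step I expect to need the most care is the first one: checking that Proposition \ref{claim:balanced} really gives the two-sided bound $mn_j\le u_j\le m(n_j+1)$ for \emph{every} stack, including singleton stacks and the minimizing stack itself (where $\theta=0$ trivially). Once that parametrization is in place, the rest is elementary calculus.
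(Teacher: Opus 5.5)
Your proof is correct, and it reaches the paper's final expression by a shorter route.

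\textbf{The paper's route.} It uses only the factor-2 form of Proposition~\ref{claim:balanced} and splits each stack of size $n_i$ into $n_i$ singletons of utility $u_i/n_i$. It then runs a perturbation argument on the relaxed problem of maximizing $\sum_i u_i^2/2$ subject to $\sum_i u_i=1$ and pairwise ratios at most $2$. A maximizer has at most one value strictly between $v_{\min}$ and $2v_{\min}$, and a second two-sided perturbation rules that value out. Finally, $\frac{k+3x}{2(k+x)^2}$ is optimized over the number $x$ of values equal to $2v_{\min}$.

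\textbf{Your route.} You replace all of this with the parametrization $u_j=m(n_j+\theta_j)$, $\theta_j\in[0,1]$, and the pointwise chord bound $\theta_j^2/n_j\le\theta_j$. Your aggregate $s=\sum_j\theta_j$ plays exactly the role of the paper's $x$, but as a continuous parameter. So you need no singleton reduction and no case analysis, and you avoid the implicit assumption that a maximizer exists. Your inequality chain applies to every equilibrium directly.

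Your use of the sharper bound $u_j\le m(n_j+1)$ is legitimate: it is the first inequality of Proposition~\ref{claim:balanced}, and the minimizing stack itself has $\theta=0$. It is not essential, though. Writing $u_j=mn_j(1+\phi_j)$ with $\phi_j\in[0,1]$ from the factor-2 form gives the same bound by the same estimate.

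\textbf{What each approach makes visible.} The paper's extremal argument exhibits the worst configuration of the relaxation: singletons with utilities in $\{v,2v\}$, one third of them at $2v$. This shows that $\frac94$ is the best bound obtainable from the balancing constraint alone. Your equality conditions ($\theta_j\in\{0,1\}$, $n_j=1$ whenever $\theta_j=1$, and $s=k/3$) identify the same configuration, so your shortcut loses nothing.
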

\begin{proof}
Using Proposition \ref{claim:balanced}, we know that $u_i/n_i$ cannot be too different from each other, while the optimal solution is the total client cost when $u_i/n_i$ are perfectly balanced, i.e., each equals $1/k$. This proof does not use game theory or anything beyond algebra and calculus and simply bounds the largest value that $c_1$ could have if each $u_i/n_i$ are not too different from each other. We include the entire argument for completeness.

From Proposition \ref{claim:balanced}, we know that for any given stacks $j$ and $m$, we have that both $\frac{u_j}{n_j}\leq2\frac{u_m}{n_m}$ and that $\frac{u_m}{n_m}\leq2\frac{u_j}{n_j}$. We can apply this to the utilities in $c_1$ because they are all from solution $S$ which is a pure Nash equilibrium solution. To bound the value of $c_1$ we consider the worst case distribution of $u_i$ values. This will give a bound on the actual value of $c_1$ because the worst possible distribution of $u_i$ values (\textit{which will not necessarily represent any realizable solution}) will be an upper bound on all $c_1$ values generated from pure Nash equilibria.

We have the following set of constraints on our facility utilities and stack sizes:
$$u_i\in (0,1)\quad n_i\in\mathbb{N}$$
$$\sum_i u_i=1\quad \sum_i n_i=k$$
$$\forall (j,m)\in N\left[\frac{u_j}{n_j}\leq2\frac{u_m}{n_m}
\right]$$
Our goal is to bound the worst-case $c_1=\sum_{i\in N} \frac{u_i^2}{2n_i}$ with the above constraints. 

Consider any values of $u_i$ and $n_i$ which obey the above constraints. Now consider a different set of values where each stack of size $n_i$ with utility $u_i$ is replaced by $n_i$ stacks of size 1, each with utility $u_i/n_i$. These new values still obey all the above constraints, and note that the cost $c_1$ is still the same as before, because the $n_i$ individual costs of $\frac{(u_i/n_i)^2}{2}$ are equivalent to the original stack's cost of $\frac{u_i^2}{2n_i}$. Thus, we can assume that for all $i$, we have $n_i=1$. 

Since we can assume all $n_i=1$, we have now reduced our goal to bounding the maximum $c_1=\sum_{i=1}^k \frac{u_i^2}{2}$ under the constraints 
$$u_i\in (0,1)\quad \sum_{i=1}^k u_i=1$$
$$\forall (j,m)\in N\left[{u_j}\leq2{u_m}
\right]$$

Let $v_{\min}=\min_i u_i$. Then, to maximize $c_1$ under the above constraints, it must be that for at least $k-1$ values $u_i$, we have that $u_i\in \{v_{\min}, 2v_{\min}\}$, i.e., all but one $u_i$ value must equal either $v_{\min}$ or $2v_{\min}$. To prove this, suppose to the contrary that, when $c_1$ is maximum, we have some $u_i$ and $u_j$ with $j\neq i$ and $u_i\geq u_j$ are contained in $(v_{\min},2v_{\min})$. Then, consider how the cost $c_1$ changes if we add a small value $\epsilon$ to $u_i$ and subtract it from $u_j$. We have that
\begin{align*}
  (u_i+\epsilon)^2+(u_j-\epsilon)^2&=u_i^2+2u_i\epsilon+\epsilon^2+u_j^2-2u_j\epsilon+\epsilon^2 \\
  &=u_i^2+u_j^2+(2u_i\epsilon-2u_j\epsilon+2\epsilon^2) \\
  &\geq u_i^2+u_j^2+2\epsilon^2 \\
  &>u_i^2+u_j^2
\end{align*}
Thus, the total cost $c_1$ strictly increases, and we still obey the constraints that all $u_i$ values are within a factor of 2 of each other if we choose a small enough $\epsilon$. This gives a contradiction with our assumption that $c_1$ is maximized. Therefore, we know that at most one $u_i$ value can be strictly between $v_{\min}$ and $2v_{\min}$ when $c_1$ is maximized.


Let $x$ be the number of $u_i$ values which equal $2v_{\min}$, and $y$ be the number of $u_i$ values which equal $v_{\min}$. By the above argument, we know that either $x+y=k$, or $x+y=k-1$ when $c_1$ is maximized.

Let us first consider the case when $x+y=k-1$. We will argue that, in fact, {\em all} $u_i$ values must be equal to either $v_{\min}$ or $2v_{\min}$ when $c_1$ is maximized, so this case is not possible. Suppose to the contrary that $x+y=k-1$ when $c_1$ is maximized. Let $u_{\ell}$ be the single value which is strictly between $v_{\min}$ and $2v_{\min}$. Then, 
$$c_1=\sum_{i=1}^k \frac{u_i^2}{2}=\frac{yv_{min}^2+x(2v_{min})^2+u_{\ell}^2}{2}=\frac{1}{2}(4x+y)v_{min}^2+\frac{1}{2}u_{\ell}^2.$$
Since $\sum_{i=1}^k u_i =1$, we can express $u_{\ell}$ as $1-yv_{\min}-x(2v_{\min})=1-(2x+y)v_{\min}$. Consider decreasing $u_{\ell}$ by some small value $\epsilon$ and increasing $v_{\min}$ by $\frac{\epsilon}{2x+y}$. These new values still satisfy all the constraints: the sum of $u_i$ values equals 1, and all pairs of $u_i$ values are within a factor of 2 of each other as long as $\epsilon$ is small enough. Consider how the cost $c_1$ has changed, however: it is now equal to  
\begin{align*}
c_1 &=\frac{1}{2}(4x+y)\left(v_{min}+\frac{\epsilon}{2x+y}\right)^2+\frac{1}{2}(u_{\ell}-\epsilon)^2\\
&\geq\frac{1}{2}(4x+y)v_{min}^2 +\frac{1}{2}u_{\ell}^2 + \frac{1}{2}\frac{2\epsilon(4x+y)}{2x+y}v_{\min}-\frac{1}{2}2\epsilon u_{\ell} +\epsilon^2.
\end{align*} 
The above cost has increased when $\frac{4x+y}{2x+y}v_{\min} > u_{\ell}$. It also increased when $\frac{4x+y}{2x+y}v_{\min} = u_{\ell},$ due to the $\epsilon^2$ term. We can similarly consider what happens when we increase the value of $u_{\ell}$ by $\epsilon>0$ and decrease $v_{\min}$ by $\frac{\epsilon}{2x+y}$. The new cost $c_1$ now equals:
\begin{align*}
c_1 &=\frac{1}{2}(4x+y)\left(v_{min}-\frac{\epsilon}{2x+y}\right)^2+\frac{1}{2}(u_{\ell}+\epsilon)^2\\
&\geq\frac{1}{2}(4x+y)v_{min}^2 +\frac{1}{2}u_{\ell}^2 - \frac{1}{2}\frac{2\epsilon(4x+y)}{2x+y}v_{\min}+\frac{1}{2}2\epsilon u_{\ell}+\epsilon^2.
\end{align*} 
This increases the cost when $\frac{4x+y}{2x+y}v_{\min} \leq u_{\ell}$. Thus, since for all values of $u_{\ell}$ there exists another valid solution with larger cost $c_1$, then our original solution cannot be maximizing $c_1$. Therefore, we have a contradiction.

We now consider the only remaining case, i.e., when $x+y=k$, and so every $u_i$ either equals $v_{\min}$ or $2v_{\min}$. For this case, $v_{\min} = \frac{1}{2x+y}$ since $\sum_{i=1}^{k} u_i = 1$. When $c_1$ is maximized, we have that 
$$c_1=\sum_{i=1}^k \frac{u_i^2}{2}=\frac{yv_{min}^2+x(2v_{min})^2}{2} =\frac{1}{2}(4x+y)v_{min}^2 = \frac{4x+y}{2(2x+y)^2}.$$ Note that since $x+y=k$, we have that $y=k-x$, so this is the same as maximizing 
\begin{equation}\label{eqn.x+y=k}
    c_1 = \frac{k+3x}{2(k+x)^2}.
\end{equation}
Simple calculus shows that this value is maximized on the range of $x\in [0,k-1]$ when $x=\frac{k}{3}$ (note that $x\leq k-1$ by its definition, since there has to be at least one value which equals $v_{\min}$). Thus, we have that 
$$c_1\leq\frac{k+3x}{2}\left(\frac{1}{k+x}\right)^2\leq\frac{2k}{2}\left(\frac{1}{\frac{4}{3}k}\right)^2=\frac{9}{16k}.$$ By Lemma \ref{lem:opt_cost} we know that
$c_{O}=\frac{1}{4k},$ so overall we have that
$$\frac{c_1}{c_{O}}\leq \frac{9/16k}{1/4k}=\frac{9}{4},$$ as desired.
\end{proof}



\begin{theorem}\label{thm:poa}
The Price of Anarchy in our setting is at most $\frac{225}{64}\approx 3.516$.
\end{theorem}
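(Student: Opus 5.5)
The plan is to chain together the three comparison lemmas just established. Fix an arbitrary pure Nash equilibrium $S$ with stacks $N$, sizes $n_i$ and utilities $u_i$. By Lemma \ref{ordering}, $c_S\leq c_2$. By Lemma \ref{25 16}, $c_2\leq \frac{25}{16}c_1$. By Lemma \ref{9 4}, $c_1\leq\frac{9}{4}c_O$. Multiplying these gives
$$c_S\leq \frac{25}{16}\cdot\frac{9}{4}\,c_O=\frac{225}{64}\,c_O.$$
Since $S$ was an arbitrary equilibrium and $c_O$ is the optimal cost (by Lemma \ref{lem:opt_cost} and Proposition \ref{claim:proportional}, $c_O=\frac{1}{4k}$), the supremum of $c_S/c_O$ over all equilibria and instances is at most $\frac{225}{64}$.

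The one point I would check carefully is that $c_2$ really accounts for every client of $S$, and for each one at a cost at least its true cost $c(x)$. The argument goes as follows.

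\begin{enumerate}
\item Every client lies either in some core interval $I_i$ or in some bubble. This follows from the corollary to Lemma \ref{monotonic}: between consecutive core intervals the cost is increasing, decreasing, or increasing then decreasing. So the region splits into at most one up-bubble followed by at most one down-bubble.
\item Each bubble is assigned to exactly one adjacent core interval. Up-bubbles go to the core interval on their left and down-bubbles to the one on their right. The proof of Lemma \ref{assignment} shows this is well defined, because no up-bubble touches $0$ and no down-bubble touches $1$.
\item The assignment is injective by Lemma \ref{assignment}, so the sum defining $c_2$ counts each bubble exactly once.
\item In $c_2$, a client in $B_i$ is charged $c(x,x_i)\geq c(x)$, which is exactly the content of Lemma \ref{ordering}.
\end{enumerate}

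Beyond that check, the argument is purely multiplicative. Lemma \ref{1 4} is used inside Lemma \ref{25 16}, and Proposition \ref{claim:balanced} is used inside Lemma \ref{9 4}. Both are valid precisely because $S$ is an equilibrium, and that is where the equilibrium hypothesis enters.
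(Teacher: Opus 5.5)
Your proposal is correct and matches the paper's proof: you chain Lemma \ref{ordering}, Lemma \ref{25 16} and Lemma \ref{9 4} to get $c_S\leq \frac{25}{16}\cdot\frac{9}{4}\,c_O=\frac{225}{64}\,c_O$ for an arbitrary pure Nash equilibrium $S$. Your extra check, that every client lies in some core interval $I_i$ or in a uniquely assigned bubble $B_i$, is sound; it only makes explicit what the proof of Lemma \ref{ordering} assumes implicitly.
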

\begin{proof}
\begin{align*}
    c_{S}&\leq c_{2} \tag{by Lemma \ref{ordering}} \\
          &\leq \frac{25}{16}c_{1} \tag{by Lemma \ref{25 16}} \\
          &\leq \frac{25}{16}\cdot\frac{9}{4}c_{O} \tag{by Lemma \ref{9 4}} \\
          &=\frac{225}{64}c_{O}
\end{align*}
Since our choice of $S$ was arbitrary, this bound holds for all pure Nash equilibria.\end{proof}

\section{Conclusion and Future Directions}
In this paper, we introduced and analyzed a new variation of the Hotelling-Downs model, in which clients do not simply use the closest facility, but instead prefer sites with many facilities. Because of this, interesting interactions between facilities begin occurring: the facilities want to be far away from other facilities so that they get more customers to themselves, but they also want to be close to other facilities since this might increase the amount of customers they get (by creating a ``destination to visit''). This can also apply to social choice settings, where political candidates decide whether they want to position themselves as different from all others (to make themselves unique) or join an existing party (to ``ride on their coat-tails''). Despite Nash equilibria having quite different structure in our setting than in the classic version, we showed that a good Nash equilibrium always exists and that the Price of Anarchy is bounded, thus establishing that Nash equilibrium solutions cannot be very bad as compared with the optimal facility placement.

Our work is only a first step in the study of what we call ``The Mall Effect'', however, and many open questions and future directions remain. The most immediate is looking at other cost functions $c(x,x_i)$ which increase with distance and decrease with the number of facilities at $x_i$. We believe that the function $c(x,x_i)=|x-x_i|/n_i$ used in this paper is natural, but many other cost functions make sense as well. For example, what changes if we consider $c(x,x_i)=|x-x_i|/(n_i)^p$ for some power $p$? For $p<1$, the properties of this model become similar to the classic one, and a Nash equilibrium does not always exist. For $p>1$, Nash equilibria always exist, and, in fact, this should help with the Price of Anarchy as well since this gives further incentive for facilities to form large stacks. However, fully analyzing the Price of Anarchy in this model remains future work. A different type of cost function may give incentive for clients to go to facilities where many other facilities are {\em nearby}, instead of at exactly the same location. Alternatively, another similar variation could be derived by non-uniformly distributing the utility from clients to the set of facilities in a given stack.

Another promising direction involves looking at different metric spaces. Although we focused on $X$ being a one-dimensional interval as in most existing work on this subject, all our results also hold if $X$ were a circle instead. It would be interesting to consider more general metric spaces, such as two-dimensional spaces, or facilities which can be located on a graph such as in \cite{Networks} and \cite{Cycle_Graphs}. It would also be interesting to consider settings where the set of clients is not continuous, but instead there is a discrete set of client locations, as well as a discrete set of possible facility locations, as in \cite{Finite_Locations}. Finally, it would be interesting to combine our client behavior with some of the other variations in the existing literature, such as putting limits on how far clients are willing to travel in order to use a facility: if all facilities are too far, they don't use any facility at all \cite{Variations,Limited_Attraction}.

Our work initiates the study of facility synergy, thereby shedding light on potential social processes that can lead to the creation of malls, shopping centers and downtowns while opening up a new avenue for future research.

\subsubsection*{Acknowledgments}
This work was partially supported by National Science Foundation award CCF-2006286. A conference version of this paper previously appeared in the proceedings of the {\em 18th International Symposium on Algorithmic Game Theory (SAGT 2025)}.

\bibliography{references}

\appendix
\section{Lower Bound on the Price of Anarchy}
While our main result is an upper bound on the price of anarchy, we
unfortunately do not have a matching lower bound; the best lower bound we have is very close to 1. Forming a non-trivial lower bound on the price of anarchy for this model seems to require new insights and techniques, and would be a good direction for future work. When forming lower bounds for this setting, note that proportional facility placements always provide the optimum solution, and for non-proportional solutions with bubbles, even showing that a solution is a Nash equilibrium already requires a significant amount of work, let alone forming a Nash equilibrium with high cost. While our structural results and lemmas above provide many necessary conditions for a solution to be a Nash equilibrium, these conditions do not seem to be sufficient, and thus do not lend themselves to forming lower bound examples. 

Despite our lower bound example being quite simple, we include it below for completeness. 

\begin{proposition}[A lower bound on the price of anarchy]
\label{prop:poa-lower-bound-65-64}
For the Hotelling--Downs game with facility synergy, there exist pure Nash equilibria with cost at least $\frac{65}{64}$ times the optimum cost.  
In other words, the price of anarchy is at least
    $\frac{65}{64}$.
\end{proposition}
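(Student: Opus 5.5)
The plan is to build an explicit equilibrium in which every stack sits off-center inside its own core interval by a fixed fraction. A single central stack cannot do this job. If all $k$ facilities sit at $a\ne\frac12$, a facility that deviates to the far end of the longer side, of length $1-a>\frac12$, can capture roughly $2(1-a)/k>1/k$ clients. The off-center mall is therefore not an equilibrium, so I use two stacks pushed \emph{outward}.

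\textbf{The construction.} Let $k=2n$ with $n$ large enough; the threshold on $n$ will come out of the deviation analysis below. Place $n$ facilities at $x_1=\frac{7}{32}$ and $n$ facilities at $x_2=\frac{25}{32}$.

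Because the two stacks have equal size, clients split at the midpoint $\frac12$. Hence $U_1=[0,\frac12]$ and $U_2=[\frac12,1]$, there are no bubbles, and every facility has utility $\frac{1}{2n}=\frac1k$.

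In each half-interval, of length $\frac12$, the stack splits it into pieces of relative size $\frac{7}{16}$ and $\frac{9}{16}$. The cost of one half is therefore
$$\frac{(7/32)^2+(9/32)^2}{2n}=\frac{130}{2048\,n}.$$
The total cost is $\frac{130}{1024n}=\frac{65}{64}\cdot\frac{1}{4k}$, which is $\frac{65}{64}$ times the optimum by Lemma~\ref{lem:opt_cost} and Proposition~\ref{claim:proportional}.

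\textbf{Checking the equilibrium.} By symmetry it suffices to consider a facility leaving stack $1$, which leaves $n-1$ facilities at $\frac{7}{32}$. There are three kinds of moves.

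\emph{Move onto stack $2$.} The facility then gets $\frac{u_2'}{n+1}$. Stack $2$ can gain at most the clients that stack $1$ loses, and I will show these are few enough that $\frac{u_2'}{n+1}\le\frac{1}{2n}$ for large $n$. A crude bound should do: the new split point $b$ solves $\frac{b-7/32}{n-1}=\frac{25/32-b}{n+1}$, so it moves left only by $O(1/n)$. This gives $u_2'=\frac12+O(1/n)$, hence $\frac{u_2'}{n+1}=\frac{1}{2n}-\Theta(1/n^2)$, and I will confirm the sign of this correction.

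\emph{Move to $y<\frac{7}{32}$ (the outer side).} I reuse the computation in the proof of Proposition~\ref{claim:PoS}, with $n-1$ facilities remaining in the stack instead of $k-1$. The captured interval has length at most $\frac{2(y-\cdot)(n-1)}{n(n-2)}$-type expressions, maximized when the far endpoint is capped at $0$. That maximum is $\frac{2\cdot 7/32}{n}=\frac{7}{16n}<\frac{1}{2n}$.

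\emph{Move to $y\in(\frac{7}{32},\frac{25}{32})$ (between the stacks).} Here the deviator is squeezed between a stack of size $n-1$ on its left and a stack of size $n$ on its right. On each side its captured region is at most what it would get against that stack alone. Using the same interval formula, the left part contributes about $\frac{(y-7/32)\cdot 2}{n}$ and the right part about $\frac{(25/32-y)\cdot 2}{n}$, up to $1+O(1/n)$ factors. I must also check that a bubble beyond the far stack is impossible; this follows from Lemma~\ref{ss bubble}, since the deviator is a singleton.

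\textbf{Main obstacle.} The hard part is the last case. The naive sum above is about $\frac{2\cdot 9/16}{n}$, which exceeds $\frac{1}{2n}$, so it is far too weak. The two exclusion regions overlap, and the binding constraint is that the deviator's cost must beat \emph{both} stacks at once. For a single point $y$, I will compute the exact interval $\{x: |x-y|\le \min(\frac{|x-7/32|}{n-1},\frac{|x-25/32|}{n})\}$. It is a short interval of length $\Theta(\mathrm{dist}/n)$ around $y$, and I will maximize its length over $y$. If the maximum turns out to be near the midpoint and exceeds $\frac{1}{2n}$, the fallback is to choose the offsets $\alpha$ and $1-\alpha$ so that the worst deviation equals exactly $\frac{1}{2n}$, and then verify that this tuning yields ratio at least $\frac{65}{64}$ (the ratio is $2(\alpha^2+(1-\alpha)^2)$).
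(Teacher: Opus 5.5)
Your placement is not a Nash equilibrium for any $n\ge 3$, so taking $n$ large is fatal. The failure is in the case you flagged as the main obstacle. Write $a=\frac{7}{32}$ and $b=\frac{25}{32}$. Suppose a facility leaves stack $1$, leaving $n-1$ there, and moves to $y\in(a,b)$. The exact set it captures is an interval around $y$ of length
\[
\min\Bigl(\frac{y-a}{n},\frac{b-y}{n-1}\Bigr)+\min\Bigl(\frac{b-y}{n+1},\frac{y-a}{n-2}\Bigr)\qquad(n\ge 3).
\]
At $y=\frac12$ this equals $\frac{9}{32}\bigl(\frac1n+\frac1{n+1}\bigr)$. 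That is strictly larger than $\frac{1}{2n}$ for every $n\ge 4$; asymptotically it is $\frac{9}{16n}$ versus $\frac{8}{16n}$. For $n=3$, the move to $y=\frac{89}{160}$ captures $\frac{9}{80}+\frac{9}{160}=\frac{27}{160}>\frac16$.

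Accounting for both constraints at once does not rescue the construction. When both neighbouring stacks are large, their costs are nearly flat near $y$. The deviator therefore captures about $2\min(y-a,b-y)/n$, which is roughly $(b-a)/n$ at the midpoint. This beats $\frac{1}{2n}$ exactly when $b-a>\frac12$, and that is what pushing the stacks outward means.

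Your fallback fails for the same reason. For $n\ge 3$, the outer move captures up to $\frac{2a}{n}$, which forces $a\le\frac14$. The midpoint move forces $1-2a\le\frac{n+1}{2n+1}$, that is, $\frac14-a\le\frac{1}{4(2n+1)}$. With $a=\alpha/2$, your ratio $2(\alpha^2+(1-\alpha)^2)$ equals $1+16(a-\frac14)^2$. So the best you can reach is $1+\frac{1}{(2n+1)^2}\to 1$. Reaching $\frac{65}{64}$ needs $|a-\frac14|=\frac1{32}$, which is impossible once $n\ge 4$.

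The missing idea is to use the smallest case, $n=2$ ($k=4$), as the paper does. The deviating facility then leaves a singleton behind, and that singleton puts no constraint on the deviator to the right of $y$. The mass captured between the stacks becomes $\min\bigl(\frac{y-a}{2},\,b-y\bigr)+\frac{b-y}{3}$. This is maximized at $y=\frac{a+2b}{3}$ with value $\frac{4(b-a)}{9}=\frac14$, which is an exact tie rather than a gain. The other moves are harmless:
\begin{itemize}
\item moving outward past the singleton yields a subset of $[0,\frac{a+y}{2}]$, of length less than $a=\frac{7}{32}$;
\item moving past $b$ yields less than $1-b=\frac{7}{32}$;
\item joining stack $2$ yields $\frac{41}{192}$, which is your own join computation at $n=2$.
\end{itemize}
Your cost computation giving $\frac{65}{64}$ is correct and independent of $n$, so it carries over. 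Your case list also omitted the move to $y>\frac{25}{32}$; it is easy, but it must be checked.
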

\begin{proof}
Consider the case of \(k=4\) facilities on the interval \(X=[0,1]\). Place two facilities at
\[
    a=\frac{7}{32}
\qquad
\text{and two facilities at}
\qquad
    b=\frac{25}{32}.
\]
We first show that this placement is a pure Nash equilibrium.

Since the two stacks have the same size, the client boundary between them is the midpoint
\[
    \frac{a+b}{2}
    =
    \frac{1}{2}.
\]
Thus, the stack at \(a\) receives the clients in \(\left[0,\frac{1}{2}\right]\), and the stack at \(b\) receives the clients in \(\left[\frac{1}{2},1\right]\). Each stack, therefore, receives client mass \(\frac{1}{2}\), and since each stack contains two facilities, each facility obtains utility
\(    \frac{1/2}{2}
    =
    \frac{1}{4}.
\)

It remains to show that no facility can obtain utility strictly larger than \(\frac{1}{4}\) by deviating.  By symmetry, it suffices to consider one of the facilities currently located at \(a\). Suppose this facility deviates to some point \(y\in[0,1]\).  After the deviation, there is one facility remaining at \(a\), two facilities at \(b\), and the deviating facility is alone at \(y\), unless \(y=b\), which we handle separately.

First suppose that \(y<a\). For a client located at \(x\) to choose the deviating facility at \(y\) over the remaining singleton facility at \(a\), it is necessary that
\[
    |x-y|\le |x-a|.
\]
Since \(y<a\), this implies
\[
    x\le \frac{a+y}{2}.
\]
Therefore, the deviating facility can receive client mass at most
\[
    \frac{a+y}{2}
    \le a
    =
    \frac{7}{32}
    <
    \frac{1}{4}.
\]
Hence such a deviation is not profitable.

Next suppose that \(y>b\). To upper bound the utility of the deviating facility, it is enough to consider only the constraint that it must beat the size-two stack at \(b\).  That is, any client choosing \(y\) must satisfy
\[
    |x-y|\le \frac{|x-b|}{2},
\]
and therefore must also satisfy 
\[
|x-y|\le |x-b|
\]
For \(y>b\), by a symmetric argument with the case when \(y<a\), we obtain that the deviating facility can receive client mass at most 
\[
    1-\frac{b+y}{2}
    \le 1-b
    =
    \frac{7}{32}
    <
    \frac{1}{4}.
\]
Hence such a deviation is not profitable either.

Now suppose that \(a<y<b\). A client \(x\) will choose the deviating facility only if the deviating facility beats both the singleton facility at \(a\) and the stack of two facilities at \(b\).

The condition that \(y\) beats \(a\) is
\[
    |x-y|\le |x-a|,
\]
which, since \(a<y\), implies
\[
    x\ge \frac{a+y}{2}.
\]
The condition that \(y\) beats the size-two stack at \(b\) is
\[
    |x-y|\le \frac{|x-b|}{2}.
\]
For \(a<y<b\), the set of points $x$ satisfying this inequality is
\[
    \left[2y-b,\; \frac{b+2y}{3}\right].
\]
Consequently, the deviating facility can receive client mass at most
\[
    \frac{b+2y}{3}
    -
    \max\left\{
        \frac{a+y}{2},
        2y-b
    \right\}.
\]
We now maximize this expression over \(y\in(a,b)\). The two lower bounds are equal when
\[
    \frac{a+y}{2}=2y-b,
\]
or equivalently
\[
    y=\frac{a+2b}{3}.
\]
For \(y\le (a+2b)/3\), the maximum lower bound is \((a+y)/2\), and the expression is increasing in \(y\).  For \(y\ge (a+2b)/3\), the maximum lower bound is \(2y-b\), and the expression is decreasing in \(y\). Hence the maximum occurs at
\[
    y=\frac{a+2b}{3}.
\]
At this point, the maximum client mass obtainable by the deviating facility is
\[
    \frac{b+2y}{3}-\frac{a+y}{2}
    =
    \frac{4(b-a)}{9}.
\]
Substituting
\[
    b-a
    =
    \frac{25}{32}-\frac{7}{32}
    =
    \frac{18}{32}
    =
    \frac{9}{16},
\]
we get
\[
    \frac{4(b-a)}{9}
    =
    \frac{4}{9}\cdot \frac{9}{16}
    =
    \frac{1}{4}.
\]
Thus, a deviation to a point \(y\in(a,b)\) can give utility at most \(\frac{1}{4}\), and hence is not strictly profitable.


It remains to consider the case \(y=b\), where the deviating facility joins the stack at \(b\). After this deviation, there is one facility at \(a\) and three facilities at \(b\). The boundary between the clients choosing \(a\) and those choosing \(b\) is the point \(x\) satisfying
\[
    x-a=\frac{b-x}{3}.
\]
Solving for \(x\) gives
\[
    x=\frac{b+3a}{4}
    =
    \frac{25/32+3(7/32)}{4}
    =
    \frac{46/32}{4}
    =
    \frac{23}{64}.
\]
Therefore, the size-three stack at \(b\) receives client mass
\[
    1-\frac{23}{64}
    =
    \frac{41}{64}.
\]
The deviating facility would receive one third of this utility, namely
\[
    \frac{1}{3}\cdot \frac{41}{64}
    =
    \frac{41}{192}
    <
    \frac{1}{4}.
\]
Thus, joining the other stack at $b$ is also not profitable.

We have shown that no facility at \(a\) can improve its utility by deviating. By symmetry, the same holds for facilities at \(b\). Hence the placement is a pure Nash equilibrium.

We now compute its social cost. Since the left stack serves \(\left[0,\frac{1}{2}\right]\) and the right stack serves \(\left[\frac{1}{2},1\right]\), and since the placement is symmetric, the total client cost is
\[
    2\int_0^{1/2}\frac{|x-a|}{2}\,dx
    =
    \int_0^{1/2}|x-a|\,dx.
\]
Using \(a<\frac{1}{2}\), we obtain
\[
    \int_0^{1/2}|x-a|\,dx
    =
    \int_0^a (a-x)\,dx
    +
    \int_a^{1/2} (x-a)\,dx
    =
    \frac{a^2}{2}
    +
    \frac{(1/2-a)^2}{2}.
\]
Substituting \(a=7/32\), we get
\[
    \frac{a^2}{2}
    +
    \frac{(1/2-a)^2}{2}
    =
    \frac{(7/32)^2}{2}
    +
    \frac{(9/32)^2}{2}
    =
    \frac{49+81}{2048}
    =
    \frac{65}{1024}.
\]
Therefore, this Nash equilibrium has social cost
\[
    c=\frac{65}{1024}.
\]

On the other hand, for \(k=4\), Lemma \ref{lem:opt_cost} tells us that the optimal social cost is
\[
    c_{O}=\frac{1}{4k}=\frac{1}{16}=\frac{64}{1024}.
\]
Consequently,
\[
    \frac{c}{c_{O}}
    =
    \frac{65/1024}{1/16}
    =
    \frac{65}{64}.
\]
This proves that the price of anarchy is at least \(65/64\) for this example.
\end{proof}

\end{document}